\documentclass[11pt]{iopart}

%Uncomment next line if AMS fonts required
%\usepackage{iopams}  

%\usepackage{ulem}
\usepackage{bbm, changes}
\usepackage{amssymb,amsthm}
\usepackage{color, setstack}
\usepackage{graphicx,epsfig}
\usepackage{times}
\usepackage{tikz}
\usepackage{mathptmx}
\usepackage[10pt]{moresize}
\usepackage{chngcntr}
\counterwithout{figure}{section}
\theoremstyle{plain}

\newtheorem{thm}{Theorem}
\newtheorem{lem}[thm]{Lemma}
\newtheorem*{cor}{Corollary}
\theoremstyle{definition}

\newcommand{\ket}[1]{|#1\rangle}
\newcommand{\bra}[1]{\langle #1|}

\definecolor{myurlcolor}{rgb}{0,0,0.7}
\definecolor{myrefcolor}{rgb}{0.8,0,0}
\usepackage{hyperref}
\hypersetup{colorlinks, linkcolor=myrefcolor,
citecolor=myurlcolor, urlcolor=myurlcolor}

\definecolor{matty}{rgb}{1,0,0}
 
\pdfminorversion=4

\begin{document}

\title[Self-testing protocols based on the chained Bell inequalities]{Self-testing protocols based on the chained Bell inequalities}

\author{I. \v{S}upi\'{c}$^1$, R. Augusiak$^1$, A. Salavrakos$^1$, A. Ac\'in$^{1,3}$}
\address{$^1$ ICFO-Institut de Ciencies Fotoniques, The Barcelona Institute of Science and Technology, 08860 Castelldefels (Barcelona), Spain}
\address{$^2$ Center for Theoretical Physics, Polish Academy of Sciences, Aleja Lotnik\'ow 32/46, 02-668 Warsaw, Poland}
\address{$^3$ ICREA--Institucio Catalana de Recerca i
Estudis Avan\c{c}ats, Lluis Companys 23, 08010 Barcelona, Spain}
\ead{ivan.supic@icfo.es}

\date{\today}

\vspace{10pt}

\begin{abstract}
Self-testing is a device-independent technique based on non-local correlations whose aim is to certify the effective uniqueness of the quantum state and measurements needed to produce these correlations. It is known that the maximal violation of some Bell inequalities suffices for this purpose. However, most of the existing self-testing protocols for two devices exploit the well-known Clauser-Horne-Shimony-Holt Bell inequality or
modifications of it, and always with two measurements per party. Here, we generalize the previous results by demonstrating that one can construct self-testing protocols based on the chained Bell inequalities, defined for two devices implementing an arbitrary number of two-output measurements. On the one hand, this proves that the
quantum state and measurements leading to the maximal violation of the chained Bell inequality are unique. On the other hand, in the limit of a large number of measurements, our approach allows one to self-test the entire plane of measurements spanned by the Pauli matrices $X$ and $Z$. Our results also imply that the chained Bell
inequalities can be used to certify two bits of perfect randomness.
\end{abstract}

% Uncomment for PACS numbers
%\pacs{00.00, 20.00, 42.10}
%
% Uncomment for keywords
%\vspace{2pc}
%\noindent{\it Keywords}: self-testing, chained Bell inequalities, SOS decomposition, robustness
%
% Uncomment for Submitted to journal title message
%\submitto{\JPA}
%
% Uncomment if a separate title page is required
%\maketitle
% 
% For two-column output uncomment the next line and choose [10pt] rather than [12pt] in the \documentclass declaration
%\ioptwocol
%

\section{Introduction}
In the last decades, it has been proven that nonlocality,
besides being very important from  a foundational
point of view, is also a resource for quantum information applications in the so-called device-independent scenario. There, devices are just seen as ``black boxes" producing a classical output, given a classical input. The devices can provide an advantage over classical information processing only when they produce non-local correlations, that is, correlations that violate a Bell inequality and, therefore, cannot be reproduced by shared classical instructions. It is then possible to construct quantum information applications exploiting this device-independent quantum certification based on Bell's theorem. Successful examples are protocols for device-independent randomness generation~\cite{DIRNG}, device-independent quantum key distribution~\cite{DIQKD} and blind quantum computation~\cite{BQC}.

Historically, self-testing can be considered as the first device-independent protocol. Introduced by Mayers and Yao~\cite{MayersYao}, the standard self-testing scenario consists of a classical user who has access to several black boxes, which display some non-local correlations. The user received these boxes from a provider, who claims that to produce the observed correlations the boxes perform some specific measurements on a given quantum state. The goal of the classical user is to make sure that the boxes work properly, i.e. that they contain the claimed state and perform the claimed measurements. This is especially relevant if the user does not trust the provider or, even if trusted, does not want to rely on the provider's ability to prepare the devices. Self-testing is the procedure that allows for this kind of certification. The self-tested states and measurements can later be used to run a given quantum information protocol, as proposed in~\cite{MayersYao} for secure key distribution. For many protocols, however, passing through self-testing techniques is not necessary and in fact it is simpler and more efficient to run the protocol directly from the observed correlations, as for example in standard device-independent quantum key distribution protocols~\cite{DIQKD}. Yet, self-testing protocols constitute an important device-independent primitive as they certify the entire description of the quantum setup only from the observed statistics.

As mentioned, the concept of self-testing was introduced by Mayers and Yao in~\cite{MayersYao}, where the procedure to self-test a maximally entangled pair of qubits is described. This protocol was made robust in subsequent works, see~\cite{MKYS,Magniez}. In the following years new self-testing protocols for more complicated states such as graph states were described \cite{McKague}, as well as protocols for self-testing more complicated operations, such as entire quantum circuits \cite{Magniez}. A general numerical method for self-testing, known as the SWAP method, was introduced in \cite{YVBSN}, providing much better estimations of robustness than the analytical proofs. This numerical method can also be used to self-test three-qubit states such as GHZ states~\cite{PVN} and W states~\cite{Wu}.

Despite its importance, we lack general techniques to construct and prove self-testing protocols. Most of the existing examples are built from the maximal violation of a Bell inequality. Based on geometrical considerations, see for instance~\cite{Werner,DPA}, one expects that generically there is a unique way, state and measurements, of producing the extremal correlations attaining the maximal quantum violation of a Bell inequality. This is not always the case, but whenever it is, we say that the corresponding Bell inequality is \textit{useful for self-testing}. Following this approach, it is possible to prove that the state and measurements maximally violating the Clauser-Horne-Shimony-Holt (CHSH) inequality \cite{CHSH} are unique~\cite{CHSHUniqueness,MKYS}, and the corresponding state is a maximally entangled two-qubit state. More recently, a self-testing protocol for any two-qubit entangled states has been derived in~\cite{Cedric} using the Bell inequalities introduced in~\cite{AMP}, and all the self-testing configurations for a maximally entangled state of two qubits using two measurements of two outputs have been identified in~\cite{Singapore}. From a general perspective, it is an interesting question to understand which Bell inequalities are useful for self-testing and what are the states and measurements certified by them. But, as seen in the previous discussion, little is known beyond the simple scenario involving two measurements of two outputs.

The main result of this work is to prove that the so-called chained Bell inequalities~\cite{chained}, defined for two devices performing an arbitrary number of measurements of two outputs, are useful for self-testing. Recall that the maximal violation of these inequalities is given by a maximally entangled two-qubit state and measurements equally spaced on an equator of the Bloch sphere~\cite{Wehner}. Our results imply that this known violation is unique. Our proof is based on a sum-of-squares (SOS) decomposition of the Bell operator defined by the chained  inequalities. The specific form of the SOS decomposition allows us to construct a quantum circuit that acts as a swap-gate, provided that the inequality is maximally violated. It is then proven that the swap-gate circuit correctly isolates the states and measurements that need to be self-tested, that is, those providing the maximal violation. 

\section{Preliminaries}

\subsection{Self-testing terminology}

 In this section we define the settings and introduce some self-testing terminology. We consider the standard Bell scenario in which two parties share a quantum state $\ket{\psi'}$ on which they can perform $n$ measurements, described by the two-outcome operators ${A_i', B_i'}$, where $i=1,\ldots,n$. The shared state and measurements are not trusted and are modelled as black boxes: each of them gets some classical input, which labels the choice of measurement, and produces a classical output, the measurement result. As the dimension is arbitrary, one can restrict the analysis to pure states and projective measurements without any loss of generality. The state $\ket{\psi'}$ lives in a joint Hilbert space $\mathcal{H}_A' \otimes \mathcal{H}_B'$ of an unknown dimension. Operators $A_i'(B_i')$ act on the part of the state living in $\mathcal{H}_A'(\mathcal{H}_B')$ , so that operators of different parties commute: $\left[A_i',B_i'\right] = 0$. Also, $M_{A_i'}^{\pm} = (\mathbbm{1} \pm A_i')/2$ and $M_{B_i'}^{\pm} = (\mathbbm{1} \pm B_i')/2$ can be considered to be projective measurements. In this scenario the parties calculate the joint outcome probabilities that can be described as $p(a,b|i,j) = \bra{\psi'} M_{A_i'}^{a} \otimes M_{B_i'}^b \ket{\psi'}$. The set of joint probabilities for all possible combinations of inputs and outputs is often simply called the set of correlations. The parties can also check whether the probability distribution is non-local, i.e. whether some Bell inequality is violated. A Bell inequality can be written as a linear combination of the observed correlations.

 Usually there is a specification of the black boxes, in self-testing terminology named as the \textit{reference experiment}, and it consists of the state $\ket{\psi} \in \mathcal{H}_A\otimes \mathcal{H}_B$ and measurements $A_i,B_i$ in some given Hilbert spaces $\mathcal{H}_A$ and $\mathcal{H}_B$ of finite dimension. On the other hand, the term \textit{physical experiment} is used for the actual state and measurements $\{\ket{\psi'}, A_i', B_i'\}$. The aim of self-testing is to compare the reference and the physical experiment and certify that they are physically equivalent. This means that the physical experiment is the same as the reference experiment up to local unitaries and additional non-relevant degrees of freedom, which are unavoidable, that is:
 \begin{eqnarray}
  \ket{\psi'}=U_{AA'}\otimes U_{BB'}\ket{\psi}_{AB}\ket{\varphi}_{A'B'} \nonumber\\
A'_i\otimes B'_i  \ket{\psi'}=U_{AA'}\otimes U_{BB'}\left(A_i\otimes B_i\ket{\psi}_{AB}\right)\ket{\varphi}_{A'B'} ,
 \end{eqnarray}
where %\replaced{
$\ket{\varphi_{A'B'}}$%}{$\ket{\varphi}_{A'}$ and $\ket{\varphi}_{B'}$}
 describe the local states of the possible additional degrees of freedom of the physical experiment and $U_{AA'}$ and $U_{BB'}$ are arbitrary local unitaries acting on systems $AA'$ and $BB'$. %\deleted{In principle, the state of the additional degrees of freedom does not have to be in a pure state, but for simplicity of notation and without loss of generality in this work we will assume purity}. 
 We introduce the product isometry $\Phi = \Phi_A \otimes \Phi_B :\mathcal{H}_A'\otimes \mathcal{H}_B' \rightarrow \mathcal{H}_A\otimes \mathcal{H}_B \otimes \mathcal{H}_A'\otimes \mathcal{H}_B'$, a map that preserves %\replaced{
 the inner product%}{distance}
 , but does not have to preserve dimension. Thus we say that a self-testing protocol is successful if there exists a local isometry relating the physical and reference experiment:
\begin{eqnarray}
\Phi\left(\ket{\psi'}\right) = \ket{\psi}\ket{\varphi} \nonumber\\
\Phi\left(A_i'\otimes B_i'\ket{\psi'}\right) = \left(A_i\otimes B_i \ket{\psi}\right)\ket{\varphi} .
\label{eq:equivalence}
\end{eqnarray}
In self-testing terminology the relation between the physical and the reference experiment described  by (\ref{eq:equivalence}) is called \textit{equivalence}. 

Trivially, a necessary condition for equivalence is that the full set of correlations obtained from the black boxes is equal to the set of correlations that one would obtain after applying the reference measurements on the reference state. A weaker necessary condition is to verify that the two sets of correlations lead to the same maximal quantum violation of a given Bell inequality.While in general checking all the correlations provides more information, there are some situations where observing just the maximum quantum value of a Bell inequality has been proven to be sufficient to certify the equivalence between the physical and the reference experiment. This is the approach we follow in this work and prove that the observation of the maximum quantum violation of the chained Bell inequalities suffices for self-testing.

\subsection{The chained Bell inequality}
\label{1}

The chained Bell inequalities were introduced in Refs. \cite{chained} to generalize the well-known Clauser-Horne-Shimony-Holt (CHSH) Bell inequality \cite{CHSH} to a larger number of measurements per party, while keeping the number of outcomes to two. Let us denote by $A_i$ and $B_i$ $(i=1,\ldots,n)$ the observables of Alice and Bob, respectively, and assume that they all have outcomes $\pm1$. Then, the chained Bell inequality for $n$ inputs reads
\begin{equation}
\label{eq:chainedBell}
\mathcal{I}_{\mathrm{ch}}^n := \sum_{i=1} ^{n}\left( \langle A_iB_i\rangle  + \langle A_{i+1} B_i\rangle \right)\leq 2n - 2.
\end{equation}
where we denote $A_{n+1}\equiv -A_1$. Notice that for $n=2$ the above formula reproduces
the CHSH Bell inequality
\begin{equation}
\langle A_1B_1\rangle+\langle A_2B_1\rangle+\langle A_2B_2\rangle-\langle A_1B_2\rangle\leq 2.
\end{equation}

Importantly, in quantum theory the chained Bell inequality can be violated by Alice and Bob if they perform measurements on an entangled quantum state. To be more precise, let there exist quantum observables $A_i$ and $B_i$, i.e., Hermitian operators with eigenvalues $\pm1$ acting on some Hilbert space $\mathcal{H}$ of, so far, unspecified dimension, and an entangled state $\ket{\psi}\in \mathcal{H}\otimes \mathcal{H}$ such that
$\langle\psi|\mathcal{B}_{n}|\psi\rangle>2n-2$, where $\mathcal{B}_{\mathrm{n}}$ stands for the
so-called \textit{Bell operator}
\begin{equation}
\mathcal{B}_{n} = \sum_{i=1} ^{n} \left(A_i\otimes B_i   + A_{i+1}\otimes  B_i\right),
\end{equation}
where again $A_{n+1}\equiv -A_1$. In particular, it has been shown in Ref. \cite{Wehner} that
the maximal quantum violation of the Bell inequality (\ref{eq:chainedBell}) amounts to
\begin{equation}
B_n^{\mathrm{\max}}=2n\cos\frac{\pi}{2n},
\end{equation}
and it is realized with the maximally entangled state of two qubits 
\begin{equation}\label{maxentstate}
\ket{\phi_+}=\frac{1}{\sqrt{2}}(\ket{00}+\ket{11}) ,
\end{equation}
and the following measurements
\begin{eqnarray}
\label{eq:ChainedMeasurementsA}
A_i = s_iX + c_iZ,\qquad B_i = s_i'X+c_i'Z,
\end{eqnarray}
where $X$ and $Z$ are the standard Pauli matrices and $s_i=\sin\phi_i$, $c_i=\cos\phi_i$, $s_i'=\sin\phi_i'$ and $c_i'=\cos\phi'_i$, where $\phi_i=[(i-1)\pi]/n$ and $\phi'_i=[(2i-1)\pi]/2n$ (see Fig. \ref{fig:measurements}).
\begin{figure}[h!]
\centering
\includegraphics[width=0.3\textwidth]{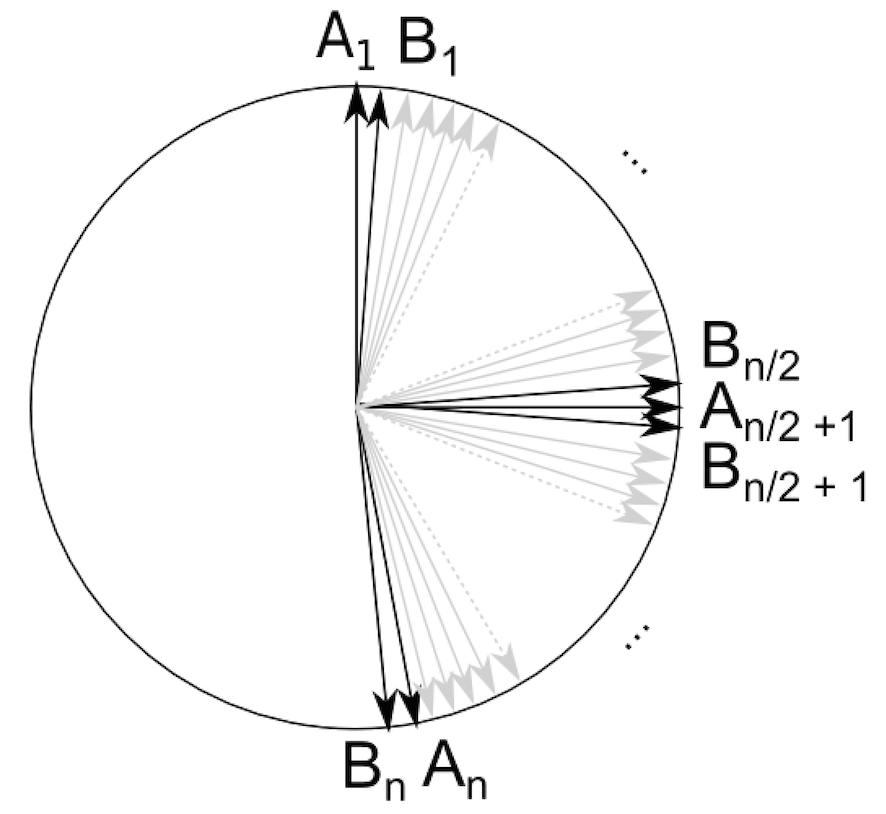}
\includegraphics[width=0.3\textwidth]{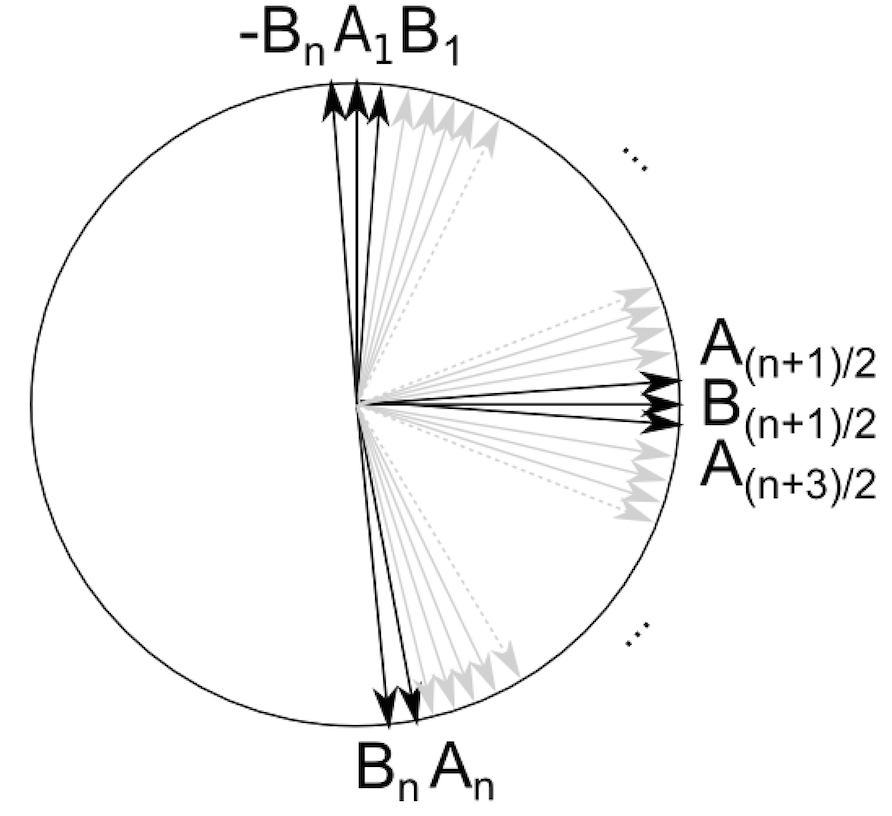}
\caption{The optimal measurements $A_i$ and $B_i$ depicted on the $XZ$ plane of the Bloch sphere with  $i -= 1,\ldots, n$. The case with an even number of measurements is on the left, and the odd case is on the right.\label{fig:measurements}}
\end{figure}

For further purposes, let us also introduce the notion of the \textit{shifted Bell operator}, that is, an operator
given by $B_n^{\max}\mathbbm{1}-\mathcal{B}_{\mathrm{n}}$. Since, by the very construction, this operator is positive-semidefinite, then there exist a finite number of operators $P_i$ (not necessarily positive) which are functions of the measurements $A_i$ and $B_i$ such that
\begin{equation}
\label{eq:BellSOS}
\mathcal{B}_s = B_n^{\max}\mathbbm{1} - \mathcal{B}_{\mathrm{n}} = \sum_i P_i^\dagger P_i.
\end{equation}
This decomposition is called a \textit{sum-of-squares} (SOS) decomposition, in this particular case of the shifted Bell operator. Furthermore, an SOS decomposition in which operators $P_i$ contain products of at most $n$ measurement operators is named \textit{SOS decomposition of n-th degree}. The use of SOS decompositions for self-testing proofs has been previously considered in~\cite{YN}. Numerically, it is possible to obtain SOS decompositions of various degrees via the Navascues-Pironio-Acin (NPA) hierarchy \cite{NPA}. In fact, the degree of the SOS decomposition is related to the level of the NPA hierarchy used. The dual of the semi-definite program defined by the {$n$-th} level of the NPA hierarchy yields an SOS decomposition of $n$-th degree.

What is important for further considerations is that if $\ket{\psi}$ maximally violates the chained Bell inequality,
then $(B_n^{\max}\mathbbm{1}-\mathcal{B}_{\mathrm{n}})\ket{\psi}=0$, which implies that $P_i\ket{\psi}=0$ for every $i$. In other words, $\ket{\psi}$ belongs to the intersection of kernels of the operators $P_i$. This imposes a plethora of conditions on the state and measurements maximally violating the Bell inequality.

\section{Self-testing with the chained Bell inequalities}
\label{2}

In this section we prove that with the aid of the chained Bell inequalities one can self-test the presence of the maximally entangled state (\ref{maxentstate}) and identify measurements (\ref{eq:ChainedMeasurementsA}), thus generalizing the results previously obtained for the CHSH Bell inequality in Refs. \cite{CHSHUniqueness,MKYS,MShi}. The advantage of our approach over the previous results lies on the fact that in the limit of a large number of measurements, the chained Bell inequality allows one to self-test the entire plane of the Bloch sphere spanned by the Pauli matrices $X$ and $Z$. Also, our results imply that the maximal quantum violation of the chained Bell inequalities is unique in the sense that there exists only one probability distribution maximally violating them. This makes chained Bell inequalities useful for randomness certification (see \cite{DPA}). In the context of nonlocal games this result confirms that measuring  (\ref{eq:ChainedMeasurementsA}) on a maximally entangled state state (\ref{maxentstate}) is the only way (up to local isometries) to win the odd cycle game with maximal probability; it is known that the probability to win the odd-cycle game in quantum regime is $\cos^2(\pi/4n)$ \cite{OC}.

\subsection{The SOS decompositions}

The key ingredient in our proof are the following two SOS decompositions of the shifted 
Bell operator associated to the chained Bell inequality whose proofs are deferred to 
\ref{app:SOS}. We start from the first degree SOS decomposition.

\begin{lem}\label{Lem1storder}
Let $\{\ket{\psi'},A_i',B_i'\}$ be the state and the measurements maximally violating the chained Bell inequality. 
Then, the corresponding shifted Bell operators admits the following SOS of first degree:
\begin{eqnarray}
\label{eq:genSOS1st}
B_n^{\max}\mathbbm{1} - \mathcal{B}_{n} &=&  \cos{\frac{\pi}{2n}}\left[ \sum_{ i = 1}^{n} \left( \mathbbm{1} - A_{i} \otimes  \frac{B_{i} + B_{i - 1}}{2 \cos{(\pi/2n)}} \right)^{2}\right. \nonumber\\
&&\left.\hspace{1.5cm} + \frac{1}{n}\sum_{j=1}^{n}\sum_{i = 1}^{n - 2} \left( \alpha_{i} B_{j} + \beta_{i} B_{i + j} + \gamma_{i} B_{ i + j + 1} \right)^{2} \right],
\end{eqnarray}
where we assume that $B_{n+j} = - B_{j}$ and $B_{n} = - B_{0}$. The coefficients $\alpha_{i}$, $\beta_{i}$, and $\gamma_{i}$ are given by
\begin{eqnarray}
\label{eq:coefficientsa}
\alpha_{i} = \frac{\sin{(\pi/n)}}{2\cos{(\pi/2n)}} \sqrt{\frac{1}{\sin{(\pi i/n)} \sin{ [\pi (i+1)/n]}}}, \\
\label{eq:coefficientsb}
\beta_{i} = \frac{-1}{2\cos{(\pi/2n)}}  \sqrt{\frac{\sin{[\pi (i+1)/n]}}{\sin{(\pi i/n)}}}, 
\end{eqnarray}
and
\begin{equation}\label{eq:coefficientsg}
\gamma_i=\frac{1}{2\cos{(\pi/2n)}}  \sqrt{\frac{\sin{(\pi i/n)}}{\sin{[\pi (i+1)/n]}}}=-\frac{1}{4\beta_i\cos^2(\pi/2n)}
%\gamma_{i} = \frac{1}{2\cos{\frac{\pi}{2n}}} \sqrt{\frac{\sin{\frac{\pi i}{n}}}{ \sin{\frac{\pi (i+1)}{n}}}}
\end{equation}
with $i = 1, \ldots, n - 2$.
\end{lem}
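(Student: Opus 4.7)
My approach is direct verification of the claimed identity by expanding both sides and matching coefficients of operator monomials. First I would rewrite the Bell operator by grouping around Alice's observables, $\mathcal{B}_{n} = \sum_{i=1}^{n} A_i \otimes (B_i + B_{i-1})$, with the convention $B_0 \equiv -B_n$ that absorbs the sign from $A_{n+1} = -A_1$; this exposes the first sum on the right-hand side of \eqref{eq:genSOS1st} as a completed square centred on $\mathcal{B}_n$. Expanding it with $A_i^2 = B_i^2 = \mathbbm{1}$ yields, after the overall $\cos(\pi/2n)$ factor, a cross term equal to $-\mathcal{B}_n$, a constant contribution from the $\mathbbm{1}$ and $A_i^2$ pieces, and a residual $\frac{1}{4\cos(\pi/2n)}\sum_{i=1}^n \{B_i,B_{i-1}\}$ from the cross term in $(B_i+B_{i-1})^2$. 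The first of these matches the $-\mathcal{B}_n$ on the left, leaving the constants and the anticommutator residual to be produced and cancelled by the second sum.

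The second sum contains no $A$-operators, so it acts only on the pure-$B$ sector. Expanding $(\alpha_i B_j + \beta_i B_{i+j} + \gamma_i B_{i+j+1})^2$ and summing over $j$ (using $B_j^2 = \mathbbm{1}$ and the antiperiodicity $B_{n+k} = -B_k$ throughout) produces the constant $n(\alpha_i^2+\beta_i^2+\gamma_i^2)$ together with contributions to the shift-invariant sums $T_k := \sum_{j=1}^n \{B_j, B_{j+k}\}$: $\alpha_i\beta_i$ at $k=i$, $\alpha_i\gamma_i$ at $k=i+1$, and $\beta_i\gamma_i$ at $k=1$. Using the antisymmetry $T_{n-k} = -T_k$ (also a consequence of antiperiodicity), one can consolidate everything onto the independent variables $T_k$ for $k \le \lfloor n/2\rfloor$, from which a small system of identities for $\alpha_i,\beta_i,\gamma_i$ drops out.

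The explicit formulas \eqref{eq:coefficientsa}--\eqref{eq:coefficientsg} make these identities tractable: the products $\alpha_i\beta_i$, $\alpha_{i-1}\gamma_{i-1}$, and $\beta_i\gamma_i$ all have compact closed forms (with $\beta_i\gamma_i$ independent of $i$), and one has the reflection symmetries $\alpha_{n-i-1}=\alpha_i$, $\beta_{n-i-1}=-\gamma_i$, $\gamma_{n-i-1}=-\beta_i$ coming from $\sin(\pi(n-i)/n) = \sin(\pi i/n)$. These combine to give $\alpha_k\beta_k + \alpha_{k-1}\gamma_{k-1}=0$ for the middle-range $T_k$'s, while the $T_{n-1}$ contribution, folded onto $-T_1$, joins the $T_1$ contributions to exactly cancel the residual from the first sum. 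The remaining requirement, that the $\mathbbm{1}$-coefficients sum to $B_n^{\max}$, reduces to proving $\sum_{i=1}^{n-2}(\alpha_i^2+\beta_i^2+\gamma_i^2) = n - n/[2\cos^2(\pi/2n)]$; I would establish it by rewriting the summand via $\sin^2 A + \sin^2 B - \sin^2(A-B) = 2\sin A\sin B\cos(A-B)$ and then evaluating the remaining $1/[\sin(\pi i/n)\sin(\pi(i+1)/n)]$ sum by the telescoping identity $\sum_{i=1}^{n-2}[\cot(\pi i/n) - \cot(\pi(i+1)/n)] = 2\cot(\pi/n)$, using $\cot(\pi(n-1)/n) = -\cot(\pi/n)$.

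The main obstacle is not conceptual but bookkeeping: correctly handling the antiperiodic convention $B_{n+k}=-B_k$ when summing over $j$, identifying the boundary term at $i=n-2$ that contributes to $T_{n-1}$ and therefore folds back onto $T_1$, and keeping straight the index range in which each of the reflection symmetries and trigonometric identities applies. Once this indexing is organized and the closed forms for $\alpha_i\beta_i$, $\alpha_{i-1}\gamma_{i-1}$ and $\beta_i\gamma_i$ are in hand, the algebraic cancellation is structural and the only genuinely nontrivial step is the telescoping sum.
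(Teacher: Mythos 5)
Your proposal is correct and follows essentially the same route as the paper's proof in \ref{app:SOS}: expand the completed square around $\mathcal{B}_n$, cancel the nearest-neighbour anticommutator terms against the second sum using $\alpha_i\gamma_i=-\alpha_{i+1}\beta_{i+1}$ and $\beta_i\gamma_i=-1/[4\cos^2(\pi/2n)]$, and verify the identity coefficient via the telescoping cotangent sum $\sum_{i=1}^{n-2}[\cot(\pi i/n)-\cot(\pi(i+1)/n)]=2\cot(\pi/n)$. Your bookkeeping in terms of the shift-invariant sums $T_k$ and the consolidated trigonometric identity $\sin^2A+\sin^2B-\sin^2(A-B)=2\sin A\sin B\cos(A-B)$ is a clean, equivalent repackaging of the paper's term-by-term computation of $\Delta_\alpha$, $\Delta_\beta$, $\Delta_\gamma$.
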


Note that the above SOS decomposition remains valid if in its second line we omit the sum over $j$ and fix $j$ to be any number from $\{1,\ldots,n\}$. Also, the transformations $A_i \rightarrow B_i$ and $B_i \rightarrow A_{i+1}$ in the first parenthesis, and $B_i \rightarrow A_i$ in the second one lead to the whole family of  $2n$ SOS decompositions. %\added{
Let us finally mention that that 
the above SOS decomposition is a particular case of an SOS decomposition for a more general Bell inequality which will be presented in Ref. \cite{SATPWA}
together with an analytical method used to derive it.%}  

It turns out, however, that none of them is enough for self-testing. In fact, we need the following second degree SOS decomposition.

\begin{lem}Let $\{\ket{\psi'},A_i',B_i'\}$ be the state and the measurements maximally violating the chained Bell inequality. Then, the corresponding shifted Bell operator admits the following second-order SOS:
\begin{eqnarray}
\label{eq:SOSnBell2ndorder}
\fl B_n^{\max}\mathbbm{1} - \mathcal{B}_n &\hspace{-0.5cm}=&
\frac{1}{8n\cos\case{\pi}{2n}}\bigg\{2(B^{\max}_n\mathbbm{1}-\mathcal{B}_n)^2
+\sum_{i,j = 1\atop j \neq i,i-1}^{n} \left[A_i \otimes (B_i + B_{i-1}) - (A_j + A_{j+1})\otimes B_{j}\right]^2 \nonumber\\ 
&&\hspace{2cm}+\sum_{i = 1}^{n}\left[\left(A_i \otimes B_i - A_{i+1} \otimes B_{i+1}\right)^2 +\left(A_i \otimes B_{i-1} - A_{i+1}\otimes B_{i}\right)^2\right]\bigg\} \nonumber\\ 
&&+ \frac{1}{2}\cos\left(\frac{\pi}{2n}\right) \sum_{i = 1}^{n-2}\left[\left( \alpha_{i} B_{1} + \beta_{i} B_{i + 1} + \gamma_{i} B_{ i + 2} \right)^2 + \left( \alpha_{i} A_{1} + \beta_{i} A_{i + 1} + \gamma_{i} A_{ i + 2} \right)^2\right],\nonumber\\
\end{eqnarray}
where we again used the notation $A_{n+1} = -A_{1}$ and $A_{0} = -A_{n}$ and the same for $B$'s, and the $\alpha_i$, $\beta_i$ and $\gamma_i$ are given in Eqs. (\ref{eq:coefficientsa}), (\ref{eq:coefficientsb}) and (\ref{eq:coefficientsg}).
\end{lem}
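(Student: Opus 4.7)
My plan is to verify the identity in (\ref{eq:SOSnBell2ndorder}) by direct symbolic expansion. The statement is an algebraic identity in the $\ast$-algebra generated by $\{A_i, B_j\}_{i,j=1}^n$ modulo $A_i^2 = B_j^2 = \mathbbm{1}$ and $[A_i, B_j] = 0$ (the latter automatic from the tensor-product structure), together with the cyclic conventions $A_{n+1} = -A_1$ and $B_0 = -B_n$. Setting $T := B_n^{\max}\mathbbm{1} - \mathcal{B}_n$ and $M := B_n^{\max} = 2n\cos(\pi/2n)$, the task is to expand the RHS and check that all monomials collapse to those of $T$.

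The RHS is evidently designed so that cancellations occur stage by stage according to the degree of the monomials. The leading concern is the quartic piece, since $2T^2$ contains $2\mathcal{B}_n^2 = 2\sum_{i,k} (A_i+A_{i+1})(A_k+A_{k+1}) \otimes B_i B_k$, producing four-letter monomials $A_pA_q \otimes B_r B_s$ that must be killed. The natural first step is to check that the second sum in the first line, $\sum_{i,j \neq i,i-1}[A_i \otimes (B_i + B_{i-1}) - (A_j + A_{j+1}) \otimes B_j]^2$, produces the matching cross terms with opposite sign, while the restriction $j \neq i, i-1$ excises a nearest-neighbour index range that the third sum $\sum_i [(A_i \otimes B_i - A_{i+1} \otimes B_{i+1})^2 + (A_i \otimes B_{i-1} - A_{i+1} \otimes B_i)^2]$ is tailored to cover. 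So stage one reduces to showing that the quartic content of $2\mathcal{B}_n^2/(8n\cos(\pi/2n))$ is exactly negated by the cross terms of these two sums.

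After the quartic cancellation, the residue is at most quadratic in the measurements. Single-party terms $A_p A_{p\pm 1}$ and $B_p B_{p\pm 1}$ arise from the $(A_j + A_{j+1})^2 \otimes B_j^2$ and $A_i^2 \otimes (B_i + B_{i-1})^2$ diagonals of the expanded squares and from the near-diagonal of $\mathcal{B}_n^2$. The last line of (\ref{eq:SOSnBell2ndorder}) is essentially the single-party part of the first-order SOS of Lemma~1 applied symmetrically on both $A$ and $B$ sides, and the trigonometric identities on $\alpha_i, \beta_i, \gamma_i$ coming from (\ref{eq:coefficientsa})--(\ref{eq:coefficientsg}) then force these single-party residues to telescope to zero. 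What remains is of the form $c_0 \mathbbm{1} + \sum c_{pq} A_p \otimes B_q$; a straightforward coefficient match then gives $c_0 = M$ and $\sum c_{pq} A_p \otimes B_q = -\mathcal{B}_n$, completing the identity.

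The main obstacle is bookkeeping rather than conceptual: the combinatorics of the quartic cancellation are delicate, and they are compounded by the cyclic sign conventions, since every index crossing the wrap-around point flips a sign, and the coefficients $\alpha_i, \beta_i, \gamma_i$ mix different index ranges through their $\sin(\pi i/n)$ ratios. The cleanest route is probably to prove an auxiliary lemma expressing $\mathcal{B}_n^2$ as a linear combination of the bracketed squares appearing on the first line of (\ref{eq:SOSnBell2ndorder}) up to single-party quadratic residues; the remaining single-party identity then reduces to applying Lemma~1 separately on each side. I would expect the full calculation to be relegated to the same appendix as Lemma~1.
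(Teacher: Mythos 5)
Your proposal follows essentially the same route as the paper's own verification in \ref{app:SOS}: a direct expansion classifying monomials by type, with the quartic terms $A_pA_q\otimes B_rB_s$ from $2\mathcal{B}_n^2$ cancelling against the cross terms of the two sums in the curly brackets (your observation that the excluded cases $j=i,i-1$ are exactly what the third sum restores is the key structural point, and is correct), the single-party quadratics $A_iA_{i+1}$, $B_iB_{i+1}$ cancelling against the last line via the same trigonometric sums $\Delta_\alpha+\Delta_\beta+\Delta_\gamma$ computed for Lemma~1, and the residual identity coefficient matching $2n\cos(\pi/2n)$. The paper's proof is carried out at the same level of detail as your plan, so nothing essential is missing.
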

Similarly we can construct another SOS decomposition from the above one by applying the following transformations to it: $A_i \rightarrow B_i$ in all terms, $B_i \rightarrow A_{i+1}$ in the curly brackets and $B_i \rightarrow A_{i}$ in the remaining terms.

%Now we will use the established SOS decompositions to derive some relations for the operators from Eqs. %(\ref{eq:Remigiusz1}) and (\ref{eq:Remigiusz2}). 

\subsection{Exact case}

We start our considerations with the ideal case when the black boxes reach the maximal quantum violation of the Bell inequality and and leave the study of the robustness of our schemes for the following section. 

\begin{figure}[h!]
\centering
\includegraphics[width=0.6\textwidth]{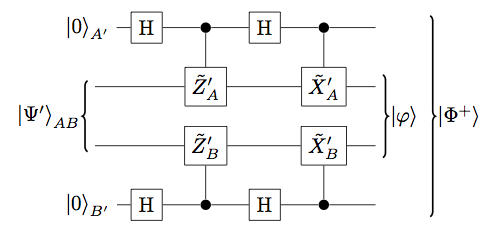}
\caption{The swap-gate used for self-testing. In it, $\ket{\psi'}_{AB}$ stands for 
the state maximally violating the given Bell inequality, while $\ket{0}_{A'}$ and $\ket{0}_{B'}$ are ancillary qubit states controlling the gates $\tilde{X}_A$,$\tilde{Z}_A$,$\tilde{X}_B$ and $\tilde{Z}_B$. Then, $H$ is the standard one-qubit Hadamard gate defined in the text. $\tilde{X}_A$,$\tilde{Z}_A$,$\tilde{X}_B$ and $\tilde{Z}_B$ are regularized, if necessary, versions of $X_A'$,$Z_A'$,$X_B'$ and $Z_B'$ respectively. At the output of the circuit the ancillary qubits are in the desired state $\ket{\phi_+}$.\label{fig:swapgate}}
\end{figure}

The departure point of our considerations is the swap-gate introduced in Ref. \cite{Mosca} and presented on Fig. \ref{fig:swapgate}. In what follows we show that with properly chosen controlled gates $X_A'$, $Z_B'$, $X_B'$ and $Z_B'$ it defines a unitary operation that satisfies Eq. (\ref{eq:equivalence}). To this end, let us choose 
\begin{equation}
\label{eq:Remigiusz1}
X_A'=\left\{
\begin{array}{ll}
A_{n/2+1}', & n\,\,\mathrm{even}\\[2ex]
\displaystyle\frac{A_{(n+1)/2}'+A_{(n+3)/2}'}{2\cos(\pi/2n)}, & n\,\,\mathrm{odd}
\end{array}\right.,\qquad Z_A'=A_1
\end{equation}
and
\begin{equation}
\label{eq:Remigiusz2}
X_B'=\left\{
\begin{array}{ll}
\displaystyle\frac{B_{n/2}'+B_{n/2+1}'}{2\cos(\pi/2n)}, & n\,\,\mathrm{even}\\[2ex]
B_{(n+1)/2}, & n\,\,\mathrm{odd}
\end{array}\right.,\qquad Z_B'=\frac{B_1'-B_n'}{2\cos(\pi/2n)}.
\end{equation}
Clearly, as all observables $A_i'$ and $B_i'$ are Hermitian and have eigenvalues $\pm1$, $Z_A'$ and $X_A'$ for even $n$ and $X_B'$ for odd $n$ are unitary. However, the operators $X_A'$ for odd $n$, $X_B'$ for even $n$ and $Z_B'$ might not be unitary in general, which in turn makes the circuit of Figure \ref{fig:swapgate} non-unitary. To overcome this problem we exploit the polar decomposition which says that one can write any operator $M$ as $M=U|M|=|M|V$ where $U$ and $V$ are some unitary operators and $|M|=\sqrt{M^{\dagger}M}$. Then, if $X_B'$ and $Z_B'$ are of full rank we define $\widetilde{X}_B=X'_B/|X'_B|$ and $\widetilde{Z}_B=Z'_B/|Z'_B|$, while if one of them is rank deficient, say $Z_B'$, we replace its zero eigenvalues by one and then use the above construction; in other words, we define $\widetilde{Z}_B=(Z_B'+P)/|Z_B'+P|$ with $P$ denoting the projector onto the kernel of $Z'_B$.

First, notice that it follows from the SOS decompositions (\ref{eq:genSOS1st}) and (\ref{eq:SOSnBell2ndorder})
that for any $i=1,\ldots,n$, the identities
\begin{equation}
A_i'\otimes \frac{B'_i+B_{i-1}'}{2\cos(\pi/2n)}\ket{\psi'}=\ket{\psi'},\qquad 
\frac{A'_i+A'_{i+1}}{2\cos(\pi/2n)}\otimes B_i\ket{\psi'}=\ket{\psi'}
\end{equation}
are satisfied, which imply in particular that 
\begin{equation}\label{properties}
X_A'\ket{\psi'}=X_B'\ket{\psi'},\qquad Z_A'\ket{\psi'}=Z_B'\ket{\psi'}.
\end{equation}
Moreover, one can prove that (see \ref{app:exact}) the operators $X'_A$ and $Z'_A$ 
anticommute in the following sense
\begin{equation}\label{AntiExact}
\{X'_A,Z'_A\}\ket{\psi'}=0.
\end{equation}
Finally, although the tilded operators are in general different than $X'_B$ and $Z'_B$, it turns out that 
they act in the same way when applied to $\ket{\psi'}$, that is,
\begin{equation}\label{properties2}
\widetilde{X}_B\ket{\psi'}=X'_B\ket{\psi'},\qquad \widetilde{Z}_B\ket{\psi'}=Z'_B\ket{\psi'}.
\end{equation}
To prove these relations, let $\|\cdot\|$ stand for the vector norm defined as $\|\ket{\psi}\|=\sqrt{\langle\psi|\psi\rangle}$. Then, the following reasoning applies \cite{Cedric}
\begin{eqnarray}\label{Faustino}
\|(\widetilde{X}_B-X'_B)\ket{\psi'}\|&=&
%\|\widetilde{X}_B(\mathbbm{1}-\widetilde{X}^{\dagger}_BX_B)\ket{\psi'}\|
\|(\mathbbm{1}-\widetilde{X}^{\dagger}_BX'_B)\ket{\psi'}\|
=\|(\mathbbm{1}-|X'_B|)\ket{\psi'}\|\nonumber\\
&=&\|(\mathbbm{1}-|X'_AX'_B|)\ket{\psi'}\|\leq \|(\mathbbm{1}-X'_AX'_B)\ket{\psi'}\|=0,
\end{eqnarray}
where the first and the second equalities stem from the fact that $\widetilde{X}_B$ is unitary and its definition, respectively. The third equality is a consequence of the fact $X_A'$ is unitary which implies that $|X'_AX'_B|=|X_B'|$, and, finally, the inequality and the last equality follow from the operator inequality $M\leq |M|$ and Eq. (\ref{properties}).

We are now ready to state and prove our first main result.

\begin{thm}\label{thm1}
Let $\{\ket{\psi'}, A_i', B_i'\}$ be the state and the measurements maximally violating the chained Bell inequality (\ref{eq:chainedBell}). Then the unitary operation $\Phi$ defined above is such that for any pair $i,j=1,\ldots,n$
\begin{eqnarray}\label{thm1:0}
& \Phi(A_i'B_j'\ket{\psi'}\ket{00})=\ket{\varphi}A_iB_j\ket{\phi_+},\\
\label{thm1:1}
& \Phi(A_i'\ket{\psi'}\ket{00})=\ket{\varphi}A_i\ket{\phi_+}, & \Phi(B_j'\ket{\psi'}\ket{00})=\ket{\varphi}B_j\ket{\phi_+},\\
\label{thm1:2}
& \Phi(\ket{\psi'}\ket{00})=\ket{\varphi}\ket{\phi_+},
\end{eqnarray}
where $\ket{\varphi}$ is some state, $\ket{\phi_+}$ is the two-qubit maximally entangled state, and $A_i$ and $B_i$ are given by Eq. (\ref{eq:ChainedMeasurementsA}).%, and, finally, it is assumed that $A_0'=B_0'=\mathbbm{1}$.
\end{thm}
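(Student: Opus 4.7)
The plan is to propagate $\ket{\psi'}\ket{00}_{A'B'}$ and its three decorated variants $A'_i\ket{\psi'}\ket{00}$, $B'_j\ket{\psi'}\ket{00}$, $A'_iB'_j\ket{\psi'}\ket{00}$ through the circuit of Fig.~\ref{fig:swapgate} stage by stage and read off the ancilla state at the output. Two ingredients do all the work: (i) a bookkeeping that moves operators between the $A$ and $B$ registers on $\ket{\psi'}$, using the identities (\ref{properties}), (\ref{properties2}), and the anticommutation (\ref{AntiExact}); and (ii) the trigonometric expansion
\[
A'_i\ket{\psi'} = (s_i X'_A + c_i Z'_A)\ket{\psi'},\qquad s_i=\sin\frac{(i-1)\pi}{n},\quad c_i=\cos\frac{(i-1)\pi}{n},
\]
together with its Bob-side analogue $B'_j\ket{\psi'} = (s'_j X'_B + c'_j Z'_B)\ket{\psi'}$, with $s'_j, c'_j$ as in (\ref{eq:ChainedMeasurementsA}).

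Assuming (ii) for the moment, I would establish (\ref{thm1:2}) first. Propagating through the two Hadamard layers and the controlled-$\tilde{Z}$, controlled-$\tilde{X}$ stages yields an expression of the form $\frac{1}{4}\sum_{a,b,a',b'}(-1)^{aa'+bb'}\ket{a'b'}(\tilde{X}_A)^a(\tilde{X}_B)^b(\tilde{Z}_A)^{a'}(\tilde{Z}_B)^{b'}\ket{\psi'}$. By (\ref{properties2}) the tildes can be dropped whenever the operator acts directly on $\ket{\psi'}$, and by (\ref{properties}) each $B$-side factor is replaced by the corresponding $A$-side one, so the expression reduces to a sum of products of $X'_A$ and $Z'_A$ applied to $\ket{\psi'}$. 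Using $(X'_A)^2\ket{\psi'} = (Z'_A)^2\ket{\psi'} = \ket{\psi'}$ and the anticommutation (\ref{AntiExact}) to collapse mixed $X'_AZ'_A$-monomials in pairs, the coefficients of $\ket{01}$ and $\ket{10}$ on the ancilla cancel, leaving $\ket{\varphi}_{AB}\otimes\ket{\phi_+}_{A'B'}$ with $\ket{\varphi}\propto(\mathbbm{1}+\tilde{X}_A X'_A)\ket{\psi'}$, a state manifestly independent of any choice of $i$ or $j$. For (\ref{thm1:1}), I would insert $A'_i\ket{\psi'} = (s_iX'_A+c_iZ'_A)\ket{\psi'}$ at the left of the same derivation; because $X'_A$ and $Z'_A$ commute with every Bob-side gate and feed through the $A$-side circuitry as the ancilla Paulis $X$ and $Z$ (which is precisely what the controlled gates are designed to do), the output becomes $\ket{\varphi}\otimes(s_iX+c_iZ)\ket{\phi_+} = \ket{\varphi}A_i\ket{\phi_+}$. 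The Bob-side identity in (\ref{thm1:1}) is symmetric via the companion SOS noted after Lemma~2, and (\ref{thm1:0}) follows by inserting both $A'_i$ and $B'_j$ simultaneously and using $[A'_i,B'_j]=0$.

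The main obstacle is ingredient (ii). The second-order SOS (\ref{eq:SOSnBell2ndorder}) forces $(\alpha_k A'_1+\beta_k A'_{k+1}+\gamma_k A'_{k+2})\ket{\psi'} = 0$ for $k=1,\ldots,n-2$, a three-term recursion among $\{A'_m\ket{\psi'}\}$. A direct trigonometric check using (\ref{eq:coefficientsa})--(\ref{eq:coefficientsg}) shows that the ansatz $s_m X'_A+c_m Z'_A$ satisfies this recursion: the $X$-coefficient cancels outright, and the $Z$-coefficient reduces to the addition-formula identity $\sin\frac{(k+1)\pi}{n}\cos\frac{k\pi}{n}-\cos\frac{(k+1)\pi}{n}\sin\frac{k\pi}{n} = \sin\frac{\pi}{n}$. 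Two anchors then pin down the general solution: $A'_1=Z'_A$ by definition of $Z'_A$, and, for even $n$, $A'_{n/2+1}=X'_A$ from (\ref{eq:Remigiusz1}), which matches $s_{n/2+1}X'_A+c_{n/2+1}Z'_A=X'_A$. For odd $n$ the averaged definition of $X'_A$ provides only a consistency check, not a second anchor; the cleanest route I would pursue is to establish (ii) first on the $B$-side (where $B'_{(n+1)/2}=X'_B$ holds exactly for odd $n$) and then transfer the result to the $A$-side using the first-order SOS relation $A'_i\otimes(B'_i+B'_{i-1})/(2\cos(\pi/2n))\ket{\psi'}=\ket{\psi'}$ together with (\ref{properties}). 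Once (ii) is in hand, what remains is routine circuit bookkeeping.
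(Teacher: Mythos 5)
Your proposal is correct in outline and, for the circuit part, follows essentially the same route as the paper: expand $A_i'\ket{\psi'}$ and $B_j'\ket{\psi'}$ in the effective operators $X_A',Z_A',X_B',Z_B'$, use (\ref{properties}), (\ref{properties2}) and (\ref{AntiExact}) to reduce everything to $\Phi(\ket{\psi'}\ket{00})$ and $\Phi(X_A'Z_B'\ket{\psi'}\ket{00})$, and read off the ancilla state. Where you genuinely depart from the paper is in how you obtain ingredient (ii), i.e.\ Lemma \ref{Structure}: the paper combines the SOS consequences (\ref{Lemma1:2})--(\ref{Lemma1:6}) into the explicit identities (\ref{Lemma2:1})--(\ref{Lemma2:4}), whereas you treat $(\alpha_k C_1+\beta_k C_{k+1}+\gamma_k C_{k+2})\ket{\psi'}=0$ as a three-term recursion, verify that the reference ansatz $s_mX+c_mZ$ solves it, and fix the solution with the anchors $A_1'=Z_A'$ and (even $n$) $A_{n/2+1}'=X_A'$. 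This is sound and arguably cleaner: writing $D_m=(A_m'-s_mX_A'-c_mZ_A')\ket{\psi'}$, the relation $\alpha_kD_1+\beta_kD_{k+1}+\gamma_kD_{k+2}=0$ with $D_1=0$ forces every $D_m$ ($m\geq 2$) to be a nonzero multiple of $D_2$, and the second anchor then kills them all. The price is that for odd $n$ you only sketch the $B$-side-first transfer, which is precisely where the paper's appendix does the heavy lifting (Eqs.\ (\ref{Lemma1:11})--(\ref{Lemma1:17}) and (\ref{Lemma2:5})--(\ref{Lemma2:8})). One concrete slip: the junk state cannot be $\ket{\varphi}\propto(\mathbbm{1}+\widetilde{X}_AX_A')\ket{\psi'}$ --- for even $n$ that operator is $2\,\mathbbm{1}$ --- the $\ket{00}$ branch actually yields $\frac{1}{4}(\mathbbm{1}+Z_A')(\mathbbm{1}+\widetilde{Z}_B)\ket{\psi'}$, i.e.\ the paper's $\ket{\varphi}=(1/2\sqrt{2})(\mathbbm{1}+Z_A')^2\ket{\psi'}$. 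Since the theorem only asserts the existence of some $\ket{\varphi}$ common to all four identities and your argument does show the residual state is independent of $i,j$, this is a computational slip rather than a gap, but the vanishing of the $\ket{01},\ket{10}$ branches and the matching of the $\ket{00},\ket{11}$ branches both hinge on the chain $\widetilde{Z}_B\to Z_B'\to Z_A'$ plus (\ref{AntiExact}), so that bookkeeping should be carried out explicitly as in Eqs.\ (\ref{eq:isometry})--(\ref{Key2}).
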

\begin{proof}
Let us first consider Eq. (\ref{thm1:0}). Owing to the linearity of $\Phi$ in both Alice's and Bob's measurements and the fact that for even $n$ (see Lemma \ref{Structure} in \ref{app:exact}):
\begin{eqnarray}
\label{eq:ABstructuremain1}
A_i'\ket{\psi'} = \left(s_iX_A' + c_iZ_A'\right)\ket{\psi'},\qquad 
\label{eq:ABstructuremain2}
B_i'\ket{\psi'} = \left(s_i'X_B' + c_i'Z_B'\right)\ket{\psi'},
\end{eqnarray}
the left-hand side of Eq. (\ref{thm1:0}) can be rewritten as
\begin{eqnarray}\label{Rioja}
\Phi(A_i'B_j'\ket{\psi'}\ket{00})&=&s_is_i'\Phi(X_A'X_B'\ket{\psi'}\ket{00})+s_ic_i'\Phi(X_A'Z_B'\ket{\psi'}\ket{00})\nonumber\\
&&+c_is_i'\Phi(Z_A'X_B'\ket{\psi'}\ket{00})+c_ic_i'\Phi(Z_A'Z_B'\ket{\psi'}\ket{00}).
\end{eqnarray}
Then, it follows from Eqs. (\ref{properties}) and (\ref{AntiExact}) that $X_A'X_B'\ket{\psi'}=Z_A'Z_B'\ket{\psi'}=\ket{\psi'}$ and $X_A'Z_B'\ket{\psi'}=-Z_A'X_B'\ket{\psi'}$, and therefore we only need to check how the map $\Phi$ applies to $\ket{\psi'}$ and $X_A'Z_B'\ket{\psi'}$. In the first case, one has
\begin{eqnarray}
\label{eq:isometry}
\fl\Phi(\ket{\psi'}\ket{00})= \frac{1}{4}\big[(\mathbbm{1} + Z_A')(\mathbbm{1} + \widetilde{Z}_B)\ket{\psi'}\ket{00}
+ X_A'(\mathbbm{1} - Z_A')(\mathbbm{1} + \widetilde{Z}_B)\ket{\psi'}\ket{10}\nonumber\\
\hspace{0.5cm}+ \widetilde{X}_B(\mathbbm{1} + Z_A')(\mathbbm{1} - \widetilde{Z}_B)\ket{\psi'}\ket{01}+ X_A'\widetilde{X}_B(\mathbbm{1} - Z_A')(\mathbbm{1} - \widetilde{Z}_B)\ket{\psi'}\ket{11}\big].
\end{eqnarray} 
Exploiting Eqs. (\ref{properties}) and (\ref{properties2}) to convert $\widetilde{Z}_B$ to $Z_B'$ and then $Z_B'$ 
to $Z_A'$, and the fact that 
$Z_A'$ has eigenvalues $\pm1$, meaning that $(\mathbbm{1}+Z_A')$ and $(\mathbbm{1}-Z_A')$ are projectors onto orthogonal subspaces, one finds that the terms in Eq. (\ref{eq:isometry}) containing the ancillary vectors 
$\ket{01}$ and $\ket{10}$ simply vanish, and the whole expression simplifies to
\begin{eqnarray}
\label{eq:isometry2}
\Phi(\ket{\psi'}\ket{00})= \frac{1}{4}\big[(\mathbbm{1} + Z_A')^2\ket{\psi'}\ket{00}
+ X_A'\widetilde{X}_B(\mathbbm{1} - Z_A')^2\ket{\psi'}\ket{11}\big].
\end{eqnarray} 
Using then the fact that $(\mathbbm{1}\pm Z'_A)^2=2(\mathbbm{1}\pm Z'_A)$, 
the anticommutation relation (\ref{AntiExact}) and the identities (\ref{properties}) and (\ref{properties2}),
we finally obtain 
\begin{equation}\label{Key1}
\Phi(\ket{\psi'}\ket{00})=\ket{\varphi}\ket{\phi_+}
\end{equation}
with 
$\ket{\varphi}=(1/2\sqrt{2})(\mathbbm{1}+Z_A')^2\ket{\psi'}$, which is exactly Eq. (\ref{thm1:2}). %for $i=j=0$.

In the second case, i.e., that of $\Phi(X_A'Z_B'\ket{\psi'}\ket{00})$, one has  
\begin{eqnarray}
\label{eq:isometry3}
\fl\Phi(X'_AZ'_B\ket{\psi'}\ket{00})&=& \frac{1}{4}\big[(\mathbbm{1} + Z'_A)(\mathbbm{1} + \widetilde{Z}_B)X'_AZ'_B\ket{\psi'}\ket{00}+ X'_A(\mathbbm{1} - Z'_A)(\mathbbm{1} + \widetilde{Z}_B)X'_AZ'_B\ket{\psi'}\ket{10}\nonumber\\
&&\hspace{0.5cm}+ \widetilde{X}_B(\mathbbm{1} + Z'_A)(\mathbbm{1} - \widetilde{Z}_B)X'_AZ'_B\ket{\psi'}\ket{01}\nonumber\\
&&\hspace{0.5cm}+ X'_A\widetilde{X}_B(\mathbbm{1} - Z'_A)(\mathbbm{1} - \widetilde{Z}_B)X'_AZ'_B\ket{\psi'}\ket{11}\big].
\end{eqnarray} 
Exploiting the properties (\ref{properties}) and (\ref{properties2}), 
the anticommutation relation (\ref{AntiExact}), and the fact that  
$(\mathbbm{1}+Z_A')(\mathbbm{1}-Z'_A)=0$, one can prove that the terms in Eq. (\ref{eq:isometry3}) containing kets $\ket{00}$ and $\ket{11}$ are zero and the whole expression reduces to 
\begin{eqnarray}
\label{eq:isometry4}
\Phi(X'_AZ'_B\ket{\psi'}\ket{00})&=& \frac{1}{4}\left[(\mathbbm{1} + Z'_A)^2\ket{\psi'}\ket{10}+ X'_AZ'_A\widetilde{X}_B(\mathbbm{1} - Z'_A)^2\ket{\psi'}\ket{01}\right].
\end{eqnarray} 
By applying then Eq. (\ref{properties}) and the anticommutation relation (\ref{AntiExact}) in the second term of Eq. (\ref{eq:isometry4}), one can 
rewrite it as 
\begin{equation}\label{Key2}
\Phi(X'_AZ'_B\ket{\psi'}\ket{00})=\ket{\varphi}X_AZ_B\ket{\phi_+}.
\end{equation}
After plugging Eqs. (\ref{Key1}) and (\ref{Key2}) into Eq. (\ref{Rioja})
and using the fact that the Pauli matrices $X$ and $Z$ anticommute and satisfy $X_AX_B\ket{\phi_+}=Z_AZ_B\ket{\phi_+}=\ket{\phi_+}$, we arrive at 
\begin{eqnarray}\label{Rioja2}
\Phi(A_i'B_j'\ket{\psi'}\ket{00})&=&s_is_i'\ket{\varphi}X_AX_B\ket{\phi_+}+s_ic_i'\ket{\varphi}X_AZ_B\ket{\phi_+}\nonumber\\
&&+c_is_i'\ket{\varphi}Z_AX_B\ket{\phi_+}+c_ic_i'\ket{\varphi}Z_AZ_B\ket{\phi_+},
\end{eqnarray}
which by virtue of the formulas (\ref{eq:ChainedMeasurementsA}) is exactly Eq. (\ref{thm1:0}). % with 
%$i,j>0$.

Let us now prove Eqs. (\ref{thm1:1}). %consider the case when either $i$ or $j$ is zero.
From the the linearity of $\Phi$ and Eq. (\ref{eq:ABstructuremain1}), we get
\begin{equation}\label{Codorniu}
\Phi(A_i'\ket{\psi'}\ket{00})=s_i\Phi(X_A'\ket{\psi'}\ket{00})+c_i\Phi(Z_A'\ket{\psi'}\ket{00}).\nonumber\\
\end{equation}
Following the same steps as above, one can prove the following relations 
\begin{equation}
\Phi(X_A'\ket{\psi'}\ket{00})=\ket{\varphi}X_A\ket{\phi_+}, \qquad
\Phi(Z_A'\ket{\psi'}\ket{00})=\ket{\varphi}Z_A\ket{\phi_+},
\end{equation}
which when plugged into Eq. (\ref{Codorniu}) leads, in virtue of Eq. (\ref{eq:ABstructuremain1}), to the first part of Eq. (\ref{thm1:1}).The second part of the same equation can be proven in exactly the same way.
\end{proof}
%
%\added{
\begin{cor}
An important corollary following directly from Theorem \ref{thm1}
is that the probability distribution $\{p(a,b|i,j)\}$ with
\begin{equation}
p(a,b|i,j)=\langle\psi'|M^a_{A_i'}\otimes M^b_{B_j'}|\psi'\rangle
\end{equation}
being the conditional probability of obtaining the outcomes $a$ and $b$ upon performing the $i$th and $j$th measurement, respectively, is unique. In other words, there is no other probability distribution maximally violating inequality (\ref{eq:chainedBell}) different than the one above. 
\end{cor}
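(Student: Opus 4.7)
The plan is to reduce the uniqueness of the probability distribution to the equality of all one- and two-body expectation values between the physical experiment $\{\ket{\psi'},A_i',B_j'\}$ and the reference one $\{\ket{\phi_+},A_i,B_j\}$, and to obtain this equality as a direct consequence of Theorem~\ref{thm1} together with the fact that the map $\Phi$ is an isometry (it is built from a concatenation of controlled unitaries and Hadamards on a state extended with normalized ancillas).

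First, I would expand the joint probabilities in terms of correlators. Since $M_{A_i'}^{a}=(\mathbbm{1}+aA_i')/2$ and $M_{B_j'}^{b}=(\mathbbm{1}+bB_j')/2$ with $a,b=\pm 1$, one has
\begin{equation}
p(a,b|i,j)=\frac{1}{4}\bigl[1+a\langle\psi'|A_i'|\psi'\rangle+b\langle\psi'|B_j'|\psi'\rangle+ab\langle\psi'|A_i'B_j'|\psi'\rangle\bigr],
\end{equation}
so it suffices to prove that each of the three nontrivial expectation values on the right-hand side coincides with the corresponding one computed on the reference state with the reference observables.

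Next, I would invoke Theorem~\ref{thm1}. Because $\Phi$ is an isometry on $\mathcal{H}_{A'}\otimes\mathcal{H}_{B'}\otimes\mathbb{C}^{2}\otimes\mathbb{C}^{2}$, inner products are preserved. Applying this to the pair $\ket{\psi'}\ket{00}$ and $A_i'B_j'\ket{\psi'}\ket{00}$ gives, on the one hand,
\begin{equation}
\langle\psi'|\langle 00|A_i'B_j'|\psi'\rangle|00\rangle=\langle\psi'|A_i'B_j'|\psi'\rangle,
\end{equation}
and on the other hand, using equations~(\ref{thm1:0}) and~(\ref{thm1:2}) of Theorem~\ref{thm1},
\begin{equation}
\langle\varphi|\langle\phi_+|\bigl(\ket{\varphi}A_iB_j\ket{\phi_+}\bigr)=\langle\varphi|\varphi\rangle\,\langle\phi_+|A_iB_j|\phi_+\rangle.
\end{equation}
Evaluating the same equality with $A_i'B_j'$ replaced by the identity yields $1=\langle\varphi|\varphi\rangle$, so $\ket{\varphi}$ is normalized and
\begin{equation}
\langle\psi'|A_i'B_j'|\psi'\rangle=\langle\phi_+|A_iB_j|\phi_+\rangle.
\end{equation}
Repeating the argument with equations~(\ref{thm1:1}) gives the analogous identities for the single-party marginals $\langle\psi'|A_i'|\psi'\rangle=\langle\phi_+|A_i|\phi_+\rangle$ and $\langle\psi'|B_j'|\psi'\rangle=\langle\phi_+|B_j|\phi_+\rangle$.

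Substituting these three identities into the expansion of $p(a,b|i,j)$ shows that $p(a,b|i,j)$ equals $\langle\phi_+|M_{A_i}^{a}\otimes M_{B_j}^{b}|\phi_+\rangle$, which depends only on the reference objects of equations~(\ref{maxentstate}) and~(\ref{eq:ChainedMeasurementsA}) and is therefore independent of the particular maximally-violating realization chosen. This proves the claimed uniqueness. The only delicate point is to verify that the map $\Phi$ of Figure~\ref{fig:swapgate} genuinely is an isometry once the polar-decomposition regularization is applied to the possibly non-unitary $X_{A/B}'$ and $Z_{B}'$; this is guaranteed by the construction preceding Theorem~\ref{thm1}, where $\widetilde{X}_B$ and $\widetilde{Z}_B$ are defined to be unitary, so all controlled gates in $\Phi$ are unitaries and the composition is automatically norm-preserving. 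No step requires any new computation beyond what Theorem~\ref{thm1} already provides.
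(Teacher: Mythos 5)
Your proposal is correct and follows essentially the same route as the paper: both arguments rest on the fact that the unitary $\Phi$ preserves inner products, so the correlators of the physical experiment can be transported onto the reference state via Theorem \ref{thm1}, with the normalization $\langle\varphi|\varphi\rangle=1$ obtained from the identity case. The only cosmetic difference is that you invoke Eq. (\ref{thm1:0}) for the two-body terms, whereas the paper distributes $A_i'$ onto the bra and $B_j'$ onto the ket so that only Eqs. (\ref{thm1:1}) and (\ref{thm1:2}) are needed; both versions are valid.
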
%}
Let us %\added{
also%}
 notice that in order to prove the uniqueness of correlations maximally violating the chained Bell inequality one needs only the conditions (\ref{thm1:1}) and (\ref{thm1:2}); the conditions
(\ref{thm1:0}) are superfluous. This is because 
\begin{eqnarray}
\langle\psi'|A_i'\otimes B_j'|\psi'\rangle
&=&(\langle 00|\langle\psi'|A_i')\Phi^{\dagger}
\Phi(B_j'\ket{\psi'}\ket{00})\nonumber\\
&=&\langle\phi_+| A_i\otimes B_j|\phi_+\rangle,
\end{eqnarray}
where the first equality follows from the fact that $\Phi$
is unitary and and second from Eqs. (\ref{thm1:1}) and (\ref{thm1:2}).

\section{Robustness}
\label{3}
For practical purposes, it is important to estimate the robustness of self-testing procedures, as in any realistic situation it is impossible due to experimental imperfections to actually reach the maximal violation of any Bell inequality. One expects, however, self-testing procedures to tolerate some deviations from the ideal case, that is, if the violation of the given Bell inequality is close to its maximum quantum value, the state producing the violation must be close to the state maximally violating this Bell inequality. In \cite{YN} it has been proven that SOS decompositions allow one to reach the best known robustness of all analytical self-test protocols.

Here we study how robust is the above self-testing procedure based on the chained Bell inequality. Assuming that the physical state $\ket{\psi'}$ and the physical measurements $A_i'$ and $B_i'$ violate the chained Bell inequality by
$B_n^{\max}-\varepsilon$ with some sufficiently small $\varepsilon>0$, we estimate 
the distance between $\ket{\psi'}$ and the reference state, and how this distance is affected when 
physical measurements are applied to it. For simplicity and clearness we give bounds for the case when the number of measurements is even; the bounds for in the odd $n$ case can be determined in an analogous way.

Let us begin by noticing that now $\langle\psi'|(B_{n}^{\max}\mathbbm{1}-\mathcal{B}_{\mathrm{n}})|\psi'\rangle=\varepsilon$, and therefore the exact relations (\ref{nierownosc1}), (\ref{nierownosc2}) and (\ref{nierownosc3}) do not hold anymore. We then need to derive their approximate versions. First, 
it stems from the first SOS decomposition that (see Lemma \ref{lemmaRobust} in \ref{app:rob})
\begin{equation}\label{nierownosc1}
\|(X_A'-X_B')\ket{\psi'}\|\leq \sqrt{\varepsilon_1(n)},\qquad
\|(Z_A'-Z_B')\ket{\psi'}\|\leq \sqrt{\varepsilon_1(n)},
\end{equation}
where $\varepsilon_1=\varepsilon/\cos(\pi/2n)$. Clearly, for any $n$, $\varepsilon_1(n)\leq \sqrt{2}$ and $\varepsilon_1(n)\to 0$ for $\varepsilon\to 0$. Moreover, following the same reasoning as in 
(\ref{Faustino}), one proves that 
\begin{equation}\label{nierownosc2}
\|(\widetilde{X}_B'-X_B')\ket{\psi'}\|\leq \sqrt{\varepsilon_1(n)},\qquad
\|(\widetilde{Z}_B'-Z_B')\ket{\psi'}\|\leq \sqrt{\varepsilon_1(n)}.
\end{equation}
Finally, both SOS decompositions (\ref{eq:genSOS1st}) and (\ref{eq:SOSnBell2ndorder}) imply the following 
approximate anticommutation relations (see Lemma \ref{Rob:lem1} in \ref{app:rob}):
\begin{equation}\label{nierownosc3}
\|\{X_A',Z_A'\}\ket{\psi'}\|\leq  \sqrt{2\varepsilon_1(n)}+\frac{1}{\xi_{n/2-1}}\left(\frac{4\sqrt{\epsilon_1(n)}}{\alpha_{n/2-1}}+n\sqrt{2\varepsilon_2(n)}\right) = \omega_{\mathrm{ev}}(n),
\end{equation}
where $\xi_i = 2\cos(2i+1)\pi/2n$, $\alpha_i$ is defined in Lemma \ref{Lem1storder}, and $\varepsilon_1$ and $\varepsilon_2$ are given in Lemma \ref{Rob:lem1} in \ref{app:rob}. In what follows we drop the dependence of $\varepsilon_1$ and $\varepsilon_2$ on $n$.

Equipped with these tools we can state and prove the second main result of this paper.

\begin{thm}\label{Rob:thm1}
Let $\{\ket{\psi'},A_i',B_i'\}$ be a state and measurements giving 
violation of the chained Bell inequality $B_n^{\max}-\varepsilon$. Then,
\begin{eqnarray}
\label{Rob:thm1:1}
& \|\Phi(A_i'B_j'\ket{\psi'}\ket{00})-\ket{\varphi}A_iB_j\ket{\phi_+}\|\leq f_{ij}(\varepsilon,n), \\
\label{Rob:thm1:2}
& \|\Phi(A_i'\ket{\psi'}\ket{00})-\ket{\varphi}A_i\ket{\phi_+}\|\leq f_{A_i}(\varepsilon,n), \\
\label{Rob:thm1:3} 
& \|\Phi(B_j'\ket{\psi'}\ket{00})-\ket{\varphi}B_j\ket{\phi_+}\|\leq f_{B_j}(\varepsilon,n),   \\
\label{Rob:thm1:4}
& \|\Phi(\ket{\psi'}\ket{00})-\ket{\varphi}\ket{\phi_+}\|\leq f(\varepsilon,n), 
\end{eqnarray}
where $i,j=1,\ldots,n$, $\Phi$ is the unitary transformation defined above, $\ket{\varphi}=(1/N)(\mathbbm{1}+Z_A')(\mathbbm{1}+\widetilde{Z}_B')\ket{\psi'}$ with $N$ denoting the length of $\ket{\varphi}$. The functions $f(\varepsilon,n)$, $f_{B_j}(\varepsilon,n)$, $f_{A_i}(\varepsilon,n)$ and $f_{ij}(\varepsilon,n)$ vanish as 
$\varepsilon\to 0$ and for sufficiently large $n$ scale with $n$ as $n^2$.
\end{thm}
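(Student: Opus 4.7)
The plan is to mirror the argument used in the exact-case proof of Theorem \ref{thm1}, but carefully propagate through each substitution the $O(\sqrt{\varepsilon})$ errors coming from the approximate identities (\ref{nierownosc1}), (\ref{nierownosc2}) and (\ref{nierownosc3}). I would establish the four bounds (\ref{Rob:thm1:1})--(\ref{Rob:thm1:4}) in reverse order of complexity, starting from (\ref{Rob:thm1:4}), which needs only the state-level relations, and then lifting the argument to the bounds involving one or two measurement operators acting on $\ket{\psi'}$. At every step I would rely on the fact that $\Phi$ is unitary (so taking norms on both sides of an equation is harmless) and apply the triangle inequality to convert exact substitutions into approximate ones.

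For (\ref{Rob:thm1:4}), I would expand $\Phi(\ket{\psi'}\ket{00})$ exactly as in Eq. (\ref{eq:isometry}). In the exact proof the cross terms with ancillas $\ket{01}$ and $\ket{10}$ vanished because $(\mathbbm{1}+Z_A')(\mathbbm{1}-Z_A')=0$ after replacing $\widetilde{Z}_B$ by $Z_A'$. Here I would perform the same replacements, paying a cost of $\sqrt{\varepsilon_1}$ for swapping $\widetilde{Z}_B \ket{\psi'}$ with $Z_B'\ket{\psi'}$ (using (\ref{nierownosc2})) and another $\sqrt{\varepsilon_1}$ for swapping $Z_B'\ket{\psi'}$ with $Z_A'\ket{\psi'}$ (using (\ref{nierownosc1})), and likewise for $\widetilde{X}_B$. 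The remaining $\ket{00}$- and $\ket{11}$-terms are reassembled into $\ket{\varphi}\ket{\phi_+}$ using the approximate anticommutation (\ref{nierownosc3}), at a cost of $\omega_{\mathrm{ev}}(n)$. For (\ref{Rob:thm1:2}) and (\ref{Rob:thm1:3}) I first need a robust analogue of Lemma \ref{Structure}, namely a bound $\|(A_i' - s_i X_A' - c_i Z_A')\ket{\psi'}\| \leq g(\varepsilon,n)$ derivable from (\ref{eq:genSOS1st}), which forces both $A_i'\otimes(B_i'+B_{i-1}')/(2\cos(\pi/2n))\ket{\psi'}\approx\ket{\psi'}$ and the three-term relations $(\alpha_i B_1 + \beta_i B_{i+1} + \gamma_i B_{i+2})\ket{\psi'}\approx 0$ at the $O(\sqrt{\varepsilon_1})$ level. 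With this robust structure lemma, linearity and the analysis of $\Phi(X_A'\ket{\psi'}\ket{00})$ and $\Phi(Z_A'\ket{\psi'}\ket{00})$ — carried out exactly as in the proof of (\ref{Rob:thm1:4}) — yield the claim. Finally, (\ref{Rob:thm1:1}) is obtained by decomposing $A_i' B_j'\ket{\psi'}$ into the four pieces $X_A' X_B'$, $X_A' Z_B'$, $Z_A' X_B'$, $Z_A' Z_B'$ acting on $\ket{\psi'}$ as in Eq. (\ref{Rioja}), and pushing the analysis leading to (\ref{Key1}) and (\ref{Key2}) through with triangle-inequality bookkeeping.

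The main obstacle is controlling how the $n$-dependence of the various prefactors compounds. The anticommutation estimate (\ref{nierownosc3}) already contains the factor $1/\xi_{n/2-1}=1/[2\cos((n-1)\pi/2n)]$, which behaves as $n/\pi$ for large $n$, and the additional factor $1/\alpha_{n/2-1}$, which is also of order $n$ through the $\sqrt{1/\sin(\pi(n-1)/(2n))\sin(\pi/2)}$ dependence; together they yield the $n^2$ scaling announced in the statement. Since the expansion of $\Phi(A_i'B_j'\ket{\psi'}\ket{00})$ produces a bounded number of terms each containing at most one use of the anticommutation, the $n^2$ scaling is preserved and not worsened. The nontrivial bookkeeping is to check that every intermediate substitution (e.g. converting $\widetilde X_B$ to $X_B'$ inside a string of the form $X_A'\widetilde X_B(\mathbbm{1}-Z_A')^2\ket{\psi'}\ket{11}$) introduces only additive, not multiplicative, errors — which is the reason for using the unitarity of $\Phi$ and the $\pm 1$ spectrum of $X_A'$ and $Z_A'$ systematically throughout.
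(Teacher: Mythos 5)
Your plan follows the paper's proof essentially step for step: expand $\Phi$ on the four strings $X_A'X_B'$, $X_A'Z_B'$, $Z_A'X_B'$, $Z_A'Z_B'$ as in the exact case, invoke a robust version of Lemma \ref{Structure} (this is exactly Lemma \ref{lastlemma} in the appendix, giving the bounds $g_{\mathrm{ev}}$ and $h_{\mathrm{ev}}$), and propagate the errors from (\ref{nierownosc1})--(\ref{nierownosc3}) additively through the circuit using the unitarity of $\Phi$ and the triangle inequality. Your account of where the $n^2$ scaling originates (the factors $1/\xi_{n/2-1}$ and $1/\alpha_{n/2-1}$ inside $\omega_{\mathrm{ev}}$) also matches the paper's Lemma \ref{Rob:lem1}.

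The one step you have not accounted for is the normalization of $\ket{\varphi}$. The circuit analysis naturally produces the \emph{unnormalized} vector $\ket{\varphi'}=(1/2\sqrt{2})(\mathbbm{1}+Z_A')(\mathbbm{1}+\widetilde{Z}_B')\ket{\psi'}$; in the exact case this is automatically a unit vector because $\Phi(\ket{\psi'}\ket{00})=\ket{\varphi}\ket{\phi_+}$ with $\Phi$ unitary, but in the robust case it is not, while the theorem is stated for the normalized $\ket{\varphi}=(2\sqrt{2}/N)\ket{\varphi'}$. You therefore need an additional bound on $\|\ket{\varphi}-\ket{\varphi'}\|=\|\ket{\varphi'}\|-1$, and this is not pure bookkeeping: the paper's Lemma \ref{junklemma} reduces it to an estimate of $|\bra{\psi'}Z_A'\ket{\psi'}|$, obtained by combining the approximate anticommutation (\ref{nierownosc3}) with (\ref{nierownosc1}) in the style of \cite{MKYS} to show $|\bra{\psi'}Z_A'\ket{\psi'}|\leq\sqrt{\varepsilon_1}+\omega'/2$. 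Without this step your final bounds would be stated against $\ket{\varphi'}$ rather than $\ket{\varphi}$, and the inequalities (\ref{Rob:thm1:1})--(\ref{Rob:thm1:4}) as claimed would not quite follow. Everything else in your outline is sound and coincides with the paper's strategy.
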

\begin{proof}As the norm $N$ of $\ket{\varphi}$ cannot be computed exactly, it turns out that 
to prove this theorem it is more convenient to first estimate the following distance
\begin{equation}
\|\Phi(A_i'B_j'\ket{\psi'}\ket{00})-\ket{\varphi'}A_iB_j\ket{\phi_+}\|
\end{equation}
with
\begin{equation}
\label{eq:phi'}
\ket{\varphi'}=\frac{1}{2\sqrt{2}}(\mathbbm{1}+Z_A')(\mathbbm{1}+\widetilde{Z}_B')\ket{\psi'}.
\end{equation}
and then show that the error we have by doing so is small for sufficiently 
small $\varepsilon$. 

From now on we will mainly follow the steps of the proof of Theorem \ref{thm1} replacing 
the identities by the corresponding inequalities. First, let us notice that  
for any $i=1,\ldots,n$ (see \ref{app:rob} for the proof):
\begin{equation}\label{Fargo}
\|[A_i'-(s_iX_A'+c_iZ_A')]\ket{\psi'}\|\leq g_{\mathrm{ev}},\qquad
\|[B_i'-(s_i'X_B'+c_i'Z_B')]\ket{\psi'}\|\leq h_{\mathrm{ev}}.
\end{equation}
where $g_{\mathrm{ev}}$ and $h_{\mathrm{ev}}$ are given in Lemma \ref{lastlemma} of the Appendix.
Denoting by $\overline{A}_i$ and $\overline{B}_i$ the operators appearing in the parentheses in (\ref{Fargo}),
we can write
\begin{eqnarray}\label{Orange}
\fl\|\Phi(A_i'B_j'\ket{\psi'}\ket{00})-\ket{\varphi'}A_iB_j\ket{\phi_+}\|&\leq& \|\Phi(A_i'B_j'\ket{\psi'}\ket{00})-\Phi(\overline{A}_i\overline{B}_j\ket{\psi'}\ket{00})\|\nonumber\\
&&+\|\Phi(\overline{A}_i\overline{B}_j\ket{\psi'}\ket{00})-\ket{\varphi'}A_iB_j\ket{\phi_+}\|,
\end{eqnarray}
and, by further exploitation of the fact that $\Phi$ is unitary, the first norm can be upper bounded as
\begin{eqnarray}\label{Homeland}
\|\Phi(A_i'B_j'\ket{\psi'}\ket{00})-\Phi(\overline{A}_i\overline{B}_j\ket{\psi'}\ket{00})\|&\leq& 
\|(A_i'B_j'-\overline{A}_i\overline{B}_j)\ket{\psi'}\|\nonumber\\
&\leq &\|(A_i'-\overline{A}_i)\ket{\psi'}\|+\|(B_j'-\overline{B}_j)\ket{\psi'}\|\nonumber\\
&\leq & g_{\mathrm{ev}}+h_{\mathrm{ev}},
\end{eqnarray}
where to obtain the second inequality we have used the standard trick of 
adding and subtracting the term $A_i'\overline{B}_j\ket{\psi'}$, the triangle inequality for the norm, and 
the fact that $A_i$ is unitary and that the spectral radius of $\overline{B}_j$ is not larger than one.
The third inequality in (\ref{Homeland}) stems directly from (\ref{Fargo}). In the cases when $A_i'$ or $B_j'$
are equal to the identity operator $\mathbbm{1}$, the above bound is replaced by $h_{\mathrm{ev}}$ and $g_{\mathrm{ev}}$, respectively, while 
in the case $A_i'=B_j'=\mathbbm{1}$, this distance is simply zero.

Let us then concentrate on the second norm on the right-hand side of 
(\ref{Orange}). Exploiting the explicit forms of the operators $\overline{A}_i$ and $\overline{B}_i$ 
and the measurements $A_i$ and $B_i$,
one has
\begin{eqnarray}\label{czekolada}
\fl\|\Phi(\overline{A}_i\overline{B}_j\ket{\psi'}\ket{00})-\ket{\varphi'}A_iB_j\ket{\phi_+}\|&\leq&
\|\Phi(X_A'X_B'\ket{\psi'}\ket{00})-\ket{\varphi'}X_AX_B\ket{\phi_+}\|\nonumber\\
&&+\|\Phi(X_A'Z_B'\ket{\psi'}\ket{00})-\ket{\varphi'}X_AZ_B\ket{\phi_+}\|\nonumber\\
&&+\|\Phi(Z_A'X_B'\ket{\psi'}\ket{00})-\ket{\varphi'}Z_AX_B\ket{\phi_+}\|\nonumber\\
&&+\|\Phi(Z_A'Z_B'\ket{\psi'}\ket{00})-\ket{\varphi'}Z_AZ_B\ket{\phi_+}\|.
\end{eqnarray}
Let us consider the first and the last norm on the right-hand side of this inequality. 
With the aid of inequalities (\ref{nierownosc1}) and the fact that $X_AX_B\ket{\phi_+}=Z_AZ_B\ket{\phi_+}=\ket{\phi_+}$ 
both can be upper bounded by $\sqrt{\varepsilon_1}+\|\Phi(\ket{\psi'}\ket{00})-\ket{\varphi'}\ket{\phi_+}\|$.
Then, from the definition of the unitary operation $\Phi$ and the state $\ket{\varphi'}$ it follows that the latter norm can be upper bounded as
\begin{eqnarray}\label{Qaqi}
\fl\|\Phi(\ket{\psi'}\ket{00})-\ket{\varphi'}\ket{\phi_+}\|&\leq&\frac{1}{4}\left(\|X_A(\mathbbm{1}-Z_A)(\mathbbm{1}+\widetilde{Z}_B)\ket{\psi'}\|+
\|\widetilde{X}_B(\mathbbm{1}+Z_A)(\mathbbm{1}-\widetilde{Z}_B)\ket{\psi'}\|\right.\nonumber\\
&&\hspace{0.5cm}+\left.\|X_A\widetilde{X}_B(\mathbbm{1}-Z_A)(\mathbbm{1}-\widetilde{Z}_B)\ket{\psi'}-\ket{\varphi'}\|\right).
\end{eqnarray}
To upper bound the first two norms in (\ref{Qaqi}) we first exploit inequalities (\ref{nierownosc1}) and (\ref{nierownosc2})
which allow us to ``convert'' $\widetilde{Z}_B$ to $Z_B$ and then $Z_B$ to $Z_A$ introducing an error of $8\sqrt{\varepsilon_1}$, and
then we use the fact that $(\mathbbm{1}+Z_A')(\mathbbm{1}-Z_A')=0$. To upper bound the last norm in 
(\ref{Qaqi}) we first use the anticommutation relation (\ref{nierownosc3}) which leads us to 
\begin{equation}
\fl\|X_A\widetilde{X}_B(\mathbbm{1}-Z_A)(\mathbbm{1}-\widetilde{Z}_B)\ket{\psi'}-\ket{\varphi'}\|\leq 2\omega_{\mathrm{ev}}(n)+
2\|X_A\widetilde{X}_B(\mathbbm{1}-\widetilde{Z}_B)\ket{\psi'}-(\mathbbm{1}+\widetilde{Z}_B)\ket{\psi'}\|.
\end{equation} 
One then uses again inequalities (\ref{nierownosc1}) and (\ref{nierownosc2}) in order to ``convert'' 
$\widetilde{Z}_B$ to $Z_B$ and then $Z_B$ to $Z_A$. This gives
\begin{eqnarray}
\fl\|X_A\widetilde{X}_B(\mathbbm{1}-Z_A)(\mathbbm{1}-\widetilde{Z}_B)\ket{\psi'}-\ket{\varphi'}\|&\leq& 2\omega_{\mathrm{ev}}(n)+8\sqrt{\varepsilon_1}\nonumber\\
&&+2\|X_A\widetilde{X}_B(\mathbbm{1}-Z_A)\ket{\psi'}-(\mathbbm{1}+Z_A)\ket{\psi'}\|.
\end{eqnarray} 
After applying (\ref{nierownosc3}) and then (\ref{nierownosc1}) and (\ref{nierownosc2}), one finally arrives at
\begin{equation}
\|X_A\widetilde{X}_B(\mathbbm{1}-Z_A)(\mathbbm{1}-\widetilde{Z}_B)\ket{\psi'}-\ket{\varphi'}\|\leq 
4\omega_{\mathrm{ev}}(n)+16\sqrt{\varepsilon_1}.
\end{equation}
Taking all this into account, we have that 
\begin{equation}\label{GranSangreDeToro}
\|\Phi(\ket{\psi'}\ket{00})-\ket{\varphi'}\ket{\phi_+}\|\leq 6\sqrt{\varepsilon_1}+\omega_{\mathrm{ev}}(n).
\end{equation}

Let us now pass to the second norm in (\ref{czekolada}) and notice that by using 
inequality (\ref{nierownosc1}) and the fact that $Z_B\ket{\phi_+}=Z_A\ket{\phi_+}$, it can be upper bounded 
in the following way
\begin{eqnarray}\label{Vader}
\fl\|\Phi(X_A'Z_B'\ket{\psi'}\ket{00})-\ket{\varphi'}X_AZ_B\ket{\phi_+}\|&\leq &\sqrt{\varepsilon_1}+\frac{1}{4}
\left(\|(\mathbbm{1}+Z_A')(\mathbbm{1}+\widetilde{Z}_B)X_A'Z_A'\ket{\psi'}\|\right.\nonumber\\
&&\left.+\|X_A'\widetilde{X}_B(\mathbbm{1}+Z_A')(\mathbbm{1}+\widetilde{Z}_B)X_A'Z_A'\ket{\psi'}\|\right)\nonumber\\
&&+\|X_A'(\mathbbm{1}-Z_A')(\mathbbm{1}+\widetilde{Z}_B)X_A'Z_A'\ket{\psi'}-\ket{\varphi'}\|\nonumber\\
&&\left.+\|\widetilde{X}_B(\mathbbm{1}+Z_A')(\mathbbm{1}-\widetilde{Z}_B)X_A'Z_A'\ket{\psi'}+\ket{\varphi'}\|\right).
\end{eqnarray}
Let us consider the first two norms appearing on the right-hand side of (\ref{Vader}). 
Exploiting the anticommutation relation (\ref{nierownosc3}) and then inequalities 
(\ref{nierownosc1}) and (\ref{nierownosc2}) to convert $\widetilde{Z}_B$ to $Z_A$, we can bound each of 
these norms by $4\sqrt{\varepsilon_1}+2\omega_{\mathrm{ev}}(n)$. Using then the inequality (\ref{nierownosc3}), 
the third term is not larger than $2\omega_{\mathrm{ev}}(n)$. To bound the fourth term in (\ref{Vader}), 
let us use the fact that $\|\mathbbm{1}+Z_A'\|\leq 2$ to write
\begin{eqnarray}
\fl\|\widetilde{X}_B(\mathbbm{1}+Z_A')(\mathbbm{1}-\widetilde{Z}_B)X_A'Z_A'\ket{\psi'}-\ket{\varphi'}\|&\leq& 
2\|\widetilde{X}_B(\mathbbm{1}-\widetilde{Z}_B)X_A'Z_A'\ket{\psi'}-(\mathbbm{1}+\widetilde{Z}_B)\ket{\psi'}\|.
\end{eqnarray} 
Subsequent usage of inequalities (\ref{nierownosc1}) and (\ref{nierownosc2}) to $\widetilde{Z}_B$
and $\widetilde{X}_B$ gives
\begin{eqnarray}
\fl\|\widetilde{X}_B(\mathbbm{1}+Z_A')(\mathbbm{1}-\widetilde{Z}_B)X_A'Z_A'\ket{\psi'}-\ket{\varphi'}\|\leq 16\sqrt{\varepsilon_1}+
2\|X_A'Z_A'(\mathbbm{1}-Z_A')X_A'\ket{\psi'}-(\mathbbm{1}+Z_A')\ket{\psi'}\|,\nonumber\\
\end{eqnarray}
which after double application of (\ref{nierownosc3}) yields
\begin{equation}
\|\widetilde{X}_B(\mathbbm{1}+Z_A')(\mathbbm{1}-\widetilde{Z}_B)X_A'Z_A'\ket{\psi'}-\ket{\varphi'}\|\leq 16\sqrt{\varepsilon_1}+2\omega_{\mathrm{ev}}(n).
\end{equation}
This together with previous estimations finally implies that  
\begin{equation}
\|\Phi(X_A'Z_A'\ket{\psi'}\ket{00})-\ket{\varphi'}X_AZ_B\ket{\phi_+}\|\leq 7\sqrt{\varepsilon_1}+2\omega_{\mathrm{ev}}(n).
\end{equation}
In a fully analogous way one can estimate the third term on the right-hand side of 
(\ref{czekolada}) 
\begin{equation}
\|\Phi(Z_A'X_B'\ket{\psi'}\ket{00})-\ket{\varphi'}Z_AX_B\ket{\phi_+}\|\leq  7\sqrt{\varepsilon_1}+2\omega_{\mathrm{ev}}(n).
\end{equation}
By plugging all these terms into (\ref{czekolada}) and then the 
resulting inequality together with (\ref{Homeland}) 
into (\ref{Orange}), one obtains 
\begin{equation}\label{SangreDeToro3}
\|\Phi(A_i'B'_j\ket{\psi'}\ket{00})-\ket{\varphi'}A_iB_j\ket{\phi_+}\|\leq 28\sqrt{\epsilon_1}+6\omega_{\mathrm{ev}}(n)+g_{\mathrm{ev}}+h_{\mathrm{ev}}.
\end{equation}

The terms from (\ref{Rob:thm1:2}) can be treated in almost exactly the same way, giving 
\begin{equation}\label{SangreDeToro}
\|\Phi(A'_i\ket{\psi'}\ket{00})-\ket{\varphi'}A_i\ket{\phi_+}\|\leq 12\sqrt{\varepsilon_1}+3\omega_{\mathrm{ev}}(n)+g_{\mathrm{ev}},
\end{equation}
while the estimation of the corresponding expression from \ref{Rob:thm1:2} follows from the application of 
inequality (\ref{nierownosc1}) to (\ref{SangreDeToro}), meaning that an additional error of 
$\sqrt{\varepsilon_1}$ has to be taken into account, which gives
\begin{equation}\label{SangreDeToro2}
\|\Phi(B'_j\ket{\psi'}\ket{00})-\ket{\varphi'}B_j\ket{\phi_+}\|\leq 13\sqrt{\varepsilon_1}+3\omega_{\mathrm{ev}}(n)+h_{\mathrm{ev}}.
\end{equation}
Finally, the case of $A_i'=B_j'=\mathbbm{1}$ has already been derived in (\ref{GranSangreDeToro}). 

The distance between the normalized state $\ket{\varphi}$ and the unnormalized one $\ket{\varphi'}$
is estimated in Lemma \ref{junklemma} to be 
\begin{equation}\label{states}
\|\ket{\varphi}-\ket{\varphi'}\|\leq \left(\frac{1}{2}+\sqrt{2}\right)\sqrt{\varepsilon_1}+\omega',
\end{equation}
where $\omega'$ is equal to $\omega_{ev}$ for an even number of inputs.

In order to obtain inequalities (\ref{Rob:thm1:1}) and 
complete the proof we use the triangle inequality for the vector norm to write
\begin{eqnarray}
\|\Phi(A_i'B_j'\ket{\psi'}\ket{00})-\ket{\varphi}A_iB_j\ket{\phi_+}\|&\leq& \|\Phi(A_i'B_j'\ket{\psi'}\ket{00})-\ket{\varphi'}A_iB_j\ket{\phi_+}\|\nonumber\\
&&+\|\ket{\varphi}-\ket{\varphi'}\|,
\end{eqnarray}
and then apply the previously determined inequalities (\ref{GranSangreDeToro}), (\ref{SangreDeToro3}), (\ref{SangreDeToro}), (\ref{SangreDeToro2}) and (\ref{states}).
All terms contributing to the functions $f(\varepsilon,n)$, $f_{B_j}(\varepsilon,n)$, $f_{A_i}(\varepsilon,n)$ and $f_{ij}(\varepsilon,n)$ scale at most as $O(n^2\sqrt{\epsilon})$. The more detailed analysis of the asymptotic behaviour of different contributions is discussed in Lemmas \ref{Rob:lem1} and \ref{lastlemma} in \ref{app:rob}.
\end{proof}
%

%\added{
Let us remark here that we have not checked whether the 
bounds (\ref{Rob:thm1:1})--(\ref{Rob:thm1:4}) are optimal
both in the distance from the maximal quantum violation $\varepsilon$
and the number of measurements $n$. Thus, it is still possible that these robustness bounds scale better that quadratically with the number of measurements. However, in order to determine such tighter bounds one would need in particular to optimize the above method over all SOS decomposition, which is certainly a difficult tasks.%}

\section{Randomness certification with the chained Bell inequalities}
It has been shown in Ref. \cite{DPA} that by exploiting the symmetry properties of the chained Bell inequality, one can certify two bits of randomness when the maximum quantum violation of this inequalities are obtained,
provided this maximal quantum violation is unique. However, a proof of the latter fact has not been known so far. Thus, our paper completes the result of Ref. \cite{DPA}.

Let us now provide an alternative way of certifying two bits of perfect randomness with the aid of the chained Bell inequality. For this purpose, we consider the following modification of the chained Bell inequality
\begin{equation}\label{newBell}
\widetilde{I}_{\mathrm{ch}}^n:=\mathcal{I}_{\mathrm{ch}}^n+
\langle A_1B_{n+1}\rangle\leq 2n-1
\end{equation}
in which Alice, as before, can measure one of $n$ observables $A_i$ while Bob has $n+1$ observables $B_i$ at his disposal, where $n$ is assumed to be even. It is not difficult to see that the maximal quantum violation of this inequality amounts to $\widetilde{B}_n^{\max}=B_n^{\max}+1$.

Let us now assume that $\ket{\psi}$ and $A_i$ and $B_i$ are the state and the measurements maximally violating (\ref{newBell}). Denoting then by
$\widetilde{\mathcal{B}}_n=\mathcal{B}_n+A_1\otimes B_{n+1}$ the corresponding Bell operator, one has $\langle\psi|(\widetilde{B}^{\max}_n\mathbbm{1}-\widetilde{\mathcal{B}}_n)|\psi\rangle=0$, which, owing to the fact that $\ket{\psi}$ also violates maximally the chained Bell inequality and that $B_n^{\max}$ is its maximal quantum violation, simplifies to
%
%\begin{equation}
$0=\langle\psi|(\mathbbm{1}-A_1\otimes B_{n+1})|\psi\rangle=(1/2)\langle\psi|(\mathbbm{1}-A_1\otimes B_{n+1})^2|\psi\rangle$,
%\end{equation}
%
where the second equality is a consequence of the fact that
$A_1$ and $B_{n+1}$ are unitary and hermitian. This implies that
\begin{equation}\label{equation}
A_1\ket{\psi}=B_{n+1}\ket{\psi}.
\end{equation}
This property implies in particular that $\langle B_{n+1}\rangle=\langle A_1\rangle$, which, taking into account the fact that
for the maximal quantum violation of the chained Bell inequality
$\langle A_i\rangle=0$ for any $i=1,\ldots,n$, implies $\langle B_{n+1}\rangle=0$.
In a quite analogous way we can now prove that
the expectation value $\langle A_{n/2+1}B_{n+1}\rangle=\langle\psi| A_{n/2+1}\otimes B_{n+1}|\psi\rangle$ vanishes. Exploiting Eq. (\ref{equation}), we can rewrite it as
$\langle\psi| A_{n/2+1}\otimes B_{n+1}|\psi\rangle=\langle\psi|A_{n/2+1}A_1|\psi\rangle$. Then, due to the fact that the expectation value
$\langle\psi| A_{n/2+1}\otimes B_{n+1}|\psi\rangle$ is real and
both operators $A_{n/2+1}$ and $B_{n+1}$ are hermitian, which means
that $\langle\psi|A_{n/2+1}A_1|\psi\rangle=\langle\psi|A_1A_{n/2+1}|\psi\rangle$,
this can be further rewritten as
\begin{equation}
\langle A_{n/2+1}B_{n+1}\rangle=
%
%\frac{1}{2}\langle\psi|A_{n/2+1}A_1|\psi\rangle+\frac{1}{2}\langle\psi|A_1A_{n/2+1}|%\psi\rangle=
%
\frac{1}{2}\langle\psi|\{A_1,A_{\frac{n}{2}+1}\}|\psi\rangle.
\end{equation}
We have already proven that if $\ket{\psi}$ and $A_i$ and $B_i$ violate maximally the chained Bell inequality, then $\{A_1,A_{n/2+1}\}\ket{\psi}=0$ which implies that
$\langle A_{n/2+1}B_{n+1}\rangle=0$, which together with
$\langle A_1\rangle=\langle B_{n+1}\rangle=0$ mean finally that
\begin{equation}\label{PinkMartini}
p(a,b|A_{\frac{n}{2}+1},B_{n+1})=\frac{1}{4}
\end{equation}
with $a,b=0,1$. All this proves that any probability distribution
$p(a,b|A_i,B_j)$ with $i=1,\ldots,n$ and $j=1,\ldots,n+1$ maximally violating
the modified chained Bell inequality (\ref{newBell}) is such that
all outcomes of the pair of measurements $A_{n/2+1},B_{n+1}$ are equiprobable
(\ref{PinkMartini}) and thus perfectly random, meaning that (\ref{newBell})
certifies two bits of perfect randomness. %$\added{(uniqueness of this extended prob. distribution is not necessary)}.

The intuition behind the above approach is very simple.
At the maximal quantum violation of (\ref{newBell}) the measurement
$B_{n+1}$ must be ``parallel'' to $A_1$ [cf. Eq. (\ref{equation})].
Therefore it is ``orthogonal'' to $A_{n/2+1}$ as the latter is orthogonal to $A_1$,
meaning that $\langle A_{n/2+1}B_{n+1}\rangle=0$ which is basically what we need.
It is worth noticing that in the even $n$ case all pairs $A_{1+i},A_{n/2+i}$ with $i=1,\ldots,n/2-1$ of Alice's observables are orthogonal, and therefore
our argument can be extended to any pair $A_{n/2+i},B_{n+1}$, that is,
$\langle A_{n/2+i},B_{n+1}\rangle=0$ provided the Bell inequality
$\mathcal{I}_{\mathrm{ch}}^n+\langle A_{n/2+i}B_{n+1}\rangle\leq 2n-1$
is maximally violated. Unfortunately, this approach does not work in the odd $n$
case as no pair of observables at Alice's or Bob's sides are orthogonal.

\section{Discussion}

In this work, we developed a scheme for self-testing the maximally entangled state of two qubits using the chained Bell inequalities. Since our results hold for any number of inputs, this allows to self-test measurements on the whole $XZ$ plane of the Bloch sphere. Some of the previous self-testing techniques found an application for blind quantum computation protocols (See \cite{BQC}). The fact that chained Bell inequalities involve and certify a quite large class of measurements makes this self-testing protocol a good candidate for some future application in blind quantum computation processes. Beyond their interest as a protocol in quantum information processing, our results also have fundamental implications, since they prove the uniqueness of the maximal violation of the chained Bell inequalities. In \cite{DPA}, this property was assumed to be true to argue maximal randomness certification in Bell tests: with our proof, their results are now confirmed.
Contrary to the expectations, {when increasing the number of measurements, the robustness of our protocol diminishes. An interesting open question is to see whether it is possible to improve this scaling. Another open question concerns chained Bell inequalities with more outcomes: can they also be useful for self-testing? If so, one could also make use of these results for certifying random \textit{dits} in systems of dimension larger than two.

\section*{Acknowledgements}
This work is supported by the ERC CoG project QITBOX and ERC AdG project OSYRIS, the AXA Chair in Quantum Information Science, the Spanish MINECO (Severo Ochoa Grant SEV-2015-0522 and FOQUS FIS2013-46768-P), the Generalitat de Catalunya (SGR875) and The John Templeton Foundation. I\v{S} also acknowledges the support of Ministry
of Science of Montenegro (Physics of Nanostructures, Contract No 01-682).

\section*{References}

\appendix
\section{Proving the SOS decompositions}\label{app:SOS}

Here we provide more detailed explanation of the SOS decompositions (\ref{eq:genSOS1st}) and (\ref{eq:SOSnBell2ndorder}). 

To verify the validity of the the first decomposition one expands
the first sum of its right-hand side and notices that apart from the 
terms forming the shifted Bell operator $B_n^{\max}\mathbbm{1}-\mathcal{B}_n$ there are some
additional terms of the form $B_kB_{k+1}$. These are cancelled out 
by the same terms appearing in the second sum on the right-hand side of Eq. (\ref{eq:genSOS1st}). 
The only trouble one has to face in reducing all the remaining terms to the shifted Bell operator is to prove that the coefficient multiplying the identity operator $\mathbbm{1}$ is exactly $2n\cos(\pi/2n)$. Let us now prove that indeed this is the case. To this end, we write this coefficient as
\begin{equation}\label{Delta}
\Delta = \cos{\frac{\pi}{2n}}\left[n + \frac{n}{2\cos^2(\pi/2n)} + \Delta_{\alpha}+\Delta_{\beta}+\Delta_{\gamma}\right],
%
%\sum_{i=1}^{n-2}(\alpha_i^2 + \beta_i^2 + \gamma_i^2)
\end{equation}
where 
\begin{equation}
\Delta_{\omega}=\sum_{i=1}^{n-2}\omega^2_i
\end{equation}
with $\omega=\alpha,\beta,\gamma$. Recall that the coefficients $\alpha_i$, $\beta_i$ and $\gamma_i$ are defined in
Eqs. (\ref{eq:coefficientsa}), (\ref{eq:coefficientsb}) and (\ref{eq:coefficientsg}). Let us now compute each 
term $\Delta_{\omega}$ separately, starting from $\Delta_{\alpha}$. Exploiting Eq. (\ref{eq:coefficientsa}) we can write
\begin{eqnarray}\label{Tempura}
\Delta_{\alpha} & = \frac{1}{4\cos^2(\pi/2n)}\sum_{i=1}^{n-2}\left[\frac{\sin^2(\pi/n)}{\sin(i\pi/n)\sin[(i+1)\pi/n]}\right] \nonumber\\
& = \frac{\sin(\pi/n)}{4\cos^2(\pi/2n)}\sum_{i=1}^{n-2}\left[\frac{\cos(i\pi/n)}{\sin(i\pi/n)} - \frac{\cos[(i+1)\pi/n]}{\sin[(i+1)\pi/n]}  \right]\nonumber\\
& = \frac{\sin(\pi/n)}{4\cos^2(\pi/2n)}\sum_{i=1}^{n-2}\left\{\cot\left(\frac{i\pi}{n}\right) - \cot\left[\frac{(i+1)\pi}{n}\right]  \right\}.
\end{eqnarray}
Now, we utilize the fact that 
\begin{equation}
\sum_{i=1}^{n-1}\cot\left(\case{\pi i}{n}\right)=0
\end{equation}
which implies that
\begin{eqnarray}\label{Sushi}
\sum_{i=1}^{n-2}\cot(\case{i\pi}{n}) = \cot(\case{\pi}{n}),
\qquad\sum_{i=1}^{n-2}\cot(\case{(i+1)\pi}{n}) = -\cot(\case{\pi}{n}),
\end{eqnarray}
Substituting Eq. (\ref{Sushi}) into Eq. (\ref{Tempura}) one finds that 
\begin{equation}\label{Deltaa}
\Delta_{\alpha}=\frac{\cos(\pi/n)}{2\cos^2(\pi/2n)}.
\end{equation}

Let us then compute $\Delta_{\beta}$. Using Eq. (\ref{eq:coefficientsb}), it 
can be explicitly written as
\begin{equation}\label{eq:D2}
\Delta_{\beta}   = \frac{1}{4\cos^2(\pi/2n)}\left[\sum_{i=1}^{n-2}\frac{\sin[(i+1)\pi/n]}{\sin(i\pi/n) }\right],
\end{equation}
which with the aid of the elementary trigonometric property that 
$\sin(x+y)=\sin x\cos y+\cos x\sin y$, rewrites as
\begin{equation}
\Delta_{\beta}=\frac{1}{4\cos^2(\pi/2n)}\left[(n-2)\cos(\case{\pi}{n})+ \sin(\case{\pi}{n})\sum_{i=1}^{n-2}\cot(\case{i\pi}{n})\right].
\end{equation}
This, by virtue of (\ref{Sushi}), finally gives 
\begin{equation}\label{Deltab}
\Delta_{\beta}=\frac{(n-1)\cos(\pi/n)}{4\cos^2(\pi/2n)}.
\end{equation}
%
%\begin{eqnarray}
% = \frac{1}{4\cos^2(\frac{\pi}{2n})}\left[\sum_{i=1}^{n-2}\frac{\sin(\frac{i\pi}{n})\cos(\frac{\pi}{n}) %+\cos(\frac{i\pi}{n})\sin(\frac{\pi}{n})}{\sin(\frac{i\pi}{n}} \right]\nonumber\\
% = \frac{1}{4\cos^2(\frac{\pi}{2n})}\left[(n-2)\cos(\frac{\pi}{n})+ \sin(\frac{\pi}{n})\sum_{i=1}^{n-2}\cot(\frac{i\pi}{n})\right]\nonumber\\
% = \frac{1}{4\cos^2(\frac{\pi}{2n})}\left[(n-2)\cos(\frac{\pi}{n})+ \sin(\frac{\pi}{n})\cot(\frac{\pi}{n}) %\right]\nonumber\\
% = \frac{(n-1)\cos(\frac{\pi}{n})}{4\cos^2(\frac{\pi}{2n})}
%\end{eqnarray}
%

Let us finally compute $\Delta_{\gamma}$. From (\ref{eq:coefficientsg}) it can be 
written explicitly as
\begin{equation}
\Delta_{\gamma}  =  \frac{1}{4\cos^2(\pi/2n)}\left[\sum_{i=1}^{n-2}\frac{\sin(i\pi/n)}{\sin[(i+1)\pi/n]} \right].
\end{equation}
Writing then $\sin(i\pi/n)=sin[(i+1-1)\pi/n]$ and using again the above trigonometric identity, one obtains
\begin{equation}
\Delta_{\gamma}=\frac{1}{4\cos^2(\pi/2n)}\left\{(n-2)\cos(\case{\pi}{n}) - \sin(\case{\pi}{n})\sum_{i=1}^{n-2}\cot\left[\case{(i+1)\pi}{n}\right] \right\}, 
\end{equation}
which, taking into account Eq. (\ref{Sushi}), simplifies to 
\begin{equation}\label{Deltag}
\Delta_{\gamma}=\frac{(n-1)\cos(\pi/n)}{4\cos^2(\pi/2n)}.
\end{equation}
Plugging then Eq. (\ref{Deltaa}), (\ref{Deltab}) and (\ref{Deltag}) into (\ref{Delta}) and using some elementary properties of the trigonometric functions, one eventually 
obtains $\Delta=2n\cos(\pi/2n)$.

To confirm validity of the second degree SOS decomposition (\ref{eq:SOSnBell2ndorder}) we follow similar argumentation. The first parenthesis on the right hand side of (\ref{eq:SOSnBell2ndorder}) introduces terms that up to some multiplicative factors belong to the following set $\{ \mathbbm{1}, A_iB_i, A_iB_{i-1}, A_iA_{i+1}, B_{i}B_{i+1}, A_iA_jB_kB_l \}$. The terms $A_iA_jB_kB_l$ are directly cancelled out by the same terms stemming 
from the second and the third parenthesis. Then, the terms $A_iA_{i+1}$ and $B_{i}B_{i+1}$ enter with the coefficient $2/[8n\cos(\pi/2n)]$ and together with the same terms resulting from the second parenthesis and entering with the coefficient $(n-2)/[8n\cos(\pi/2n)]$ they are cancel out by those resulting from the third line of (\ref{eq:SOSnBell2ndorder}).
The terms $A_iB_i$ and $A_iB_{i-1}$ give rise to the shifted Bell operator, and, finally, the identity operator $\mathbbm{1}$ is multiplied by the following expression
\begin{equation}
 \frac{1}{8n\cos(\pi/2n)}\left\{\left[8n^2\cos^2(\case{\pi}{2n}) + 4n\right] + 4n(n-2) + 4n\right\} + \frac{n\cos(\pi/n)}{2\cos^2(\pi/2n)}
\end{equation}
which after some movements simplifies to 
$2n\cos(\pi/2n)$. This is exactly the multiplicative factor of 
identity operator in the shifted Bell operator.

\section{Exact case}\label{app:exact}

Here we present detailed proofs of the anticommutation relation (\ref{AntiExact}), and also of some auxiliary relations for the measurements $A_i$ and $B_i$. Before we proceed let us note that in some of the following expressions operators might be indexed by any integer (not just from the set $\{ 1, \dots , n \} $), and in those cases we use the notation $C_{n+i} = -C_i$ and $C_{-i} = -C_{n-i}$. The intuition for this notation can be found on Bloch sphere representation of the measurements (\ref{fig:measurements}), where we can see that if one would draw the next measurement after $C_n$, and note it as $C_{n+1}$ it would be parallel to $-C_1$, and similarly for any $C_{n+i}$. 

Let us start by proving the following lemma.

\begin{lem}\label{Exact:lem1}
Let $\{\ket{\psi'},A_i',B_i'\}$ be the pure state and the measurements realizing the maximal quantum violation of the chained Bell inequalities. Then, the following identities are true:
\begin{equation}\label{Exact:lem1:1}
A'_i\ket{\psi'}=\frac{B_i'+B_{i-1}'}{2\cos(\pi/2n)}\ket{\psi'} \equiv B'_{i-1,i}\ket{\psi'}
\end{equation} 
for $i=1,\ldots,n$,
\begin{equation}\label{Exact:lem1:2}
(\alpha_iC_j+\beta_iC_{i+j}+\gamma_iC_{i+j+1})\ket{\psi'}=0
\end{equation}
for $i=1,\ldots,n-2$, $j=1,\ldots,n$ and $C=A',B'$, and
\begin{equation}\label{Exact:lem1:3}
(A_i'B_i'-A_{i+1}'B_{i+1}')\ket{\psi'}=0
\end{equation}
\begin{equation}\label{Exact:lem1:4}
(A_i'B_{i-1}'-A_{i+1}'B_{i}')\ket{\psi'}=0
\end{equation}
for $i=1,\ldots,n$.
\end{lem}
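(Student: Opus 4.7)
The plan is to derive each of the four identities directly from the SOS decompositions established in Lemma \ref{Lem1storder} and its second-order counterpart. The governing principle, already highlighted at the end of Section 2.2, is that if $\ket{\psi'}$ maximally violates the chained inequality then every operator $P$ appearing as a square $P^{\dagger}P$ in any valid SOS decomposition of the shifted Bell operator must satisfy $P\ket{\psi'}=0$. Indeed, $\langle\psi'|\sum_k P_k^{\dagger}P_k|\psi'\rangle=0$ together with $\|P_k\ket{\psi'}\|^2\geq 0$ for each $k$ forces every term to vanish individually. All four identities will then be read off by picking the appropriate square.

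First I would establish (\ref{Exact:lem1:1}) using the first sum in (\ref{eq:genSOS1st}): the vanishing of $(\mathbbm{1}-A_i'\otimes B'_{i-1,i})\ket{\psi'}$ combined with $(A_i')^2=\mathbbm{1}$ immediately gives $A_i'\ket{\psi'}=B'_{i-1,i}\ket{\psi'}$, with the tensor products on the trivial side understood implicitly. Identity (\ref{Exact:lem1:2}) for $C=B'$ is then an immediate consequence of the second sum in (\ref{eq:genSOS1st}), applied separately for each pair $(i,j)$. For the same identity with $C=A'$ I would appeal to one of the ``twin'' SOS decompositions mentioned immediately after Lemma \ref{Lem1storder}, obtained by the substitutions $A_i\to B_i$, $B_i\to A_{i+1}$ in the first bracket and $B_i\to A_i$ in the second; each such decomposition produces squares of the form $(\alpha_i A'_j+\beta_i A'_{i+j}+\gamma_i A'_{i+j+1})^2$ that must also annihilate $\ket{\psi'}$ by the same argument.

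Identities (\ref{Exact:lem1:3}) and (\ref{Exact:lem1:4}) will fall out of the second line of the second-order SOS (\ref{eq:SOSnBell2ndorder}): the presence of $(A_i'\otimes B_i'-A_{i+1}'\otimes B_{i+1}')^2$ and $(A_i'\otimes B_{i-1}'-A_{i+1}'\otimes B_i')^2$ among the SOS summands forces both of these operators to annihilate $\ket{\psi'}$, which is precisely what is claimed. I do not foresee any genuine analytical obstacle, since the hard work has already been done in proving the SOS decompositions themselves in \ref{app:SOS}; what remains is essentially a bookkeeping exercise. The only mild care needed is with the wrap-around conventions $C_{n+i}=-C_i$ and $C_{-i}=-C_{n-i}$: one must verify that the boundary cases of (\ref{Exact:lem1:3}) and (\ref{Exact:lem1:4}) are consistent with the $A'_{n+1}=-A'_1$ identification used throughout, and similarly for the $B'$ indices appearing in (\ref{Exact:lem1:2}).
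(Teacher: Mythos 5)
Your proposal is correct and follows essentially the same route as the paper: the paper's proof likewise observes that maximal violation forces each square in the SOS decompositions (\ref{eq:genSOS1st}) and (\ref{eq:SOSnBell2ndorder}) to annihilate $\ket{\psi'}$, and reads the four identities off the corresponding terms. Your explicit invocation of the ``twin'' SOS decompositions for the $C=A'$ case of (\ref{Exact:lem1:2}) is a detail the paper leaves implicit, and your handling of it is right.
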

\begin{proof}From the fact that $\ket{\psi'}$
and $A_i'$ and $B_i'$ violate the chained Bell inequality maximally it follows that 
$\langle\psi|(B_n^{\max}\mathbbm{1}-\mathcal{B}_n)\ket{\psi'}=0$. Now, the 
first SOS decomposition (\ref{eq:genSOS1st}) for the operator 
$B_n^{\max}\mathbbm{1}-\mathcal{B}_n$ implies Eqs. (\ref{Exact:lem1:1})
and (\ref{Exact:lem1:2}), while the second one implies Eqs. (\ref{Exact:lem1:3})
and (\ref{Exact:lem1:4})
\end{proof}

%\subsection{Even number of measurements}\label{appeven}

\begin{lem}\label{Exact:Anti}
Let $\{\ket{\psi'},A_i',B_i'\}$ be the pure state and the measurements realizing the maximal quantum violation of the chained Bell inequalities. Then, the following relations are true:
\begin{equation}\label{Lemma1:e}
\{A_1',A_{\frac{n}{2}+1}'\}\ket{\psi'}=0
\end{equation}
for even $n$, and
\begin{equation}\label{Lemma1:o}
\{A_1',A_{\frac{n+1}{2}}'+A_{\frac{n+3}{2}}'\}\ket{\psi'}=0
\end{equation}
for odd $n$.
\end{lem}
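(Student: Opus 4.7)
The plan is to reduce both claims to a single scalar recurrence. Define $D_k := \{A_1', A_k'\}\ket{\psi'}$ for $k = 1, \ldots, n$; we show by induction that $D_k = 2\cos[(k-1)\pi/n]\ket{\psi'}$, whereupon (\ref{Lemma1:e}) and (\ref{Lemma1:o}) follow by specialization of the index.

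To derive the recurrence we start from Eq.~(\ref{Exact:lem1:2}) with $C=A$, $j=1$ and $i=k-1$, which asserts $(\alpha_{k-1}A_1'+\beta_{k-1}A_k'+\gamma_{k-1}A_{k+1}')\ket{\psi'}=0$. Multiplying on the left by $A_1'$ and using $(A_1')^2=\mathbbm{1}$ produces an equation in the products $A_1'A_k'\ket{\psi'}$ and $A_1'A_{k+1}'\ket{\psi'}$. To obtain the twin equation with reversed orderings, we substitute $A_1'\ket{\psi'}=(B_0'+B_1')\ket{\psi'}/[2\cos(\pi/2n)]$ from Eq.~(\ref{Exact:lem1:1}) inside $(\alpha_{k-1}A_1'+\beta_{k-1}A_k'+\gamma_{k-1}A_{k+1}')A_1'\ket{\psi'}$, commute the Bob factor past the Alice linear combination using $[A_i',B_j']=0$, and then apply (\ref{Exact:lem1:2}) once more to annihilate the result. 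Adding the two equations gives the recurrence $\gamma_{k-1}D_{k+1}+\beta_{k-1}D_k+2\alpha_{k-1}\ket{\psi'}=0$, valid for $k=2,\ldots,n-1$. We regard this swap-of-orderings trick---exploiting $[A_i',B_j']=0$ to interchange the order of Alice's factors via a detour through Bob---as the only genuinely non-routine step.

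Two base values are needed: trivially $D_1=2\ket{\psi'}$, while $D_2=2\cos(\pi/n)\ket{\psi'}$ follows from multiplying $(A_1'B_1'-A_2'B_2')\ket{\psi'}=0$ of Eq.~(\ref{Exact:lem1:3}) on the left by $A_2'B_1'$ and by $A_1'B_2'$ to reduce $\{A_1',A_2'\}\ket{\psi'}$ to $\{B_1',B_2'\}\ket{\psi'}$, and then squaring Eq.~(\ref{Exact:lem1:1}) at $i=2$ to get $\{B_1',B_2'\}\ket{\psi'}=[4\cos^2(\pi/2n)-2]\ket{\psi'}=2\cos(\pi/n)\ket{\psi'}$. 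The inductive step, that the claimed closed form satisfies the recurrence, reduces after insertion of (\ref{eq:coefficientsa})--(\ref{eq:coefficientsg}) and clearing a common positive prefactor to the elementary identity $\sin[(k-1)\pi/n]\cos(k\pi/n)-\sin(k\pi/n)\cos[(k-1)\pi/n]+\sin(\pi/n)=0$, that is, the sine subtraction formula. Finally, for even $n$ one has $D_{n/2+1}=2\cos(\pi/2)\ket{\psi'}=0$, which is (\ref{Lemma1:e}); for odd $n$, $\cos[(n\mp 1)\pi/(2n)]=\pm\sin[\pi/(2n)]$ gives $D_{(n+1)/2}+D_{(n+3)/2}=0$, which is (\ref{Lemma1:o}).
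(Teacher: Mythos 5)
Your proof is correct, and it takes a genuinely different route from the paper's. The paper proves the even and odd cases separately via a fairly intricate chain of ``reflection'' identities [its Eqs.~(\ref{Lemma1:5}), (\ref{Lemma1:6}) and (\ref{Lemma1:13})], combined with the telescoping sums (\ref{Lemma1:9})--(\ref{Lemma1:10}) built from the second-degree relations (\ref{Exact:lem1:3})--(\ref{Exact:lem1:4}); the anticommutator is first converted to mixed $A'B'$ products via (\ref{Exact:lem1:1}) and then collapsed. You instead prove the stronger and more uniform statement $\{A_1',A_k'\}\ket{\psi'}=2\cos[(k-1)\pi/n]\ket{\psi'}$ for all $k$, from a first-order recurrence obtained by pairing the two orderings of $(\alpha_{k-1}A_1'+\beta_{k-1}A_k'+\gamma_{k-1}A_{k+1}')$ against $A_1'\ket{\psi'}$ --- the ordering swap via a detour through Bob's commuting operators is sound, and the recurrence closes by the sine subtraction formula. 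I checked the base case $D_2=2\cos(\pi/n)\ket{\psi'}$ (your reduction of $\{A_1',A_2'\}$ to $\{B_1',B_2'\}$ via (\ref{Exact:lem1:3}) and the ``squaring'' of (\ref{Exact:lem1:1}) are both valid), the index ranges ($k=2,\ldots,n-1$ suffices to reach $D_{n/2+1}$ and $D_{(n+3)/2}$ for all relevant $n$), and the final specializations; the closed form also matches what the reference observables $A_k=s_kX+c_kZ$ predict, which is a reassuring consistency check. What your approach buys is economy and extra information: even and odd $n$ are handled identically, the second-degree SOS relations are needed only once (for $D_2$) rather than throughout, and you obtain the value of \emph{every} anticommutator $\{A_1',A_k'\}\ket{\psi'}$ rather than only the vanishing ones. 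What the paper's route buys, by working directly with the mixed $A'B'$ identities, is a set of intermediate relations [(\ref{Lemma1:5}), (\ref{Lemma1:6}), (\ref{Lemma1:13})] that are reused later in Lemma~\ref{Structure} and in the robustness analysis, so your shortcut would not by itself replace all of \ref{app:exact}.
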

\begin{proof}We prove the even and odd $n$ case separately.

\textbf{Even number of measurements.} Let us begin by noting that 
by setting $j=k-i$ with $k=1,\ldots,n$ in (\ref{Exact:lem1:2}), one obtains
\begin{equation}\label{Lemma1:2}
(\alpha_iC_{k-i}+\beta_iC_{k}+\gamma_iC_{k+1})\ket{\psi'}=0.
\end{equation}
On the other hand, by shifting $i\to n-i-1$ and setting $j=k+i+1$, we arrive at
\begin{equation}\label{Lemma1:3}
(\alpha_{n-i-1}C_{k+i+1}+\beta_{n-i-1}C_{k+n}+\gamma_{n-i-1}C_{k+n+1})\ket{\psi'}=0,
\end{equation}
which, by noting that $C_{k+n}=-C_k$ for any $k=1,\ldots,n-1$, $\alpha_{n-i-1}=\alpha_i$ and $\beta_{n-i-1}=-\gamma_i$ for any $i=1,\ldots,n-2$, can further be simplified to
\begin{equation}\label{Lemma1:4}
(\alpha_{i}C_{k+i+1}+\gamma_{i}C_k+\beta_{i}C_{k+1})\ket{\psi'}=0.
\end{equation}
After summing Eqs. (\ref{Lemma1:2}) and (\ref{Lemma1:4}) and performing some straightforward manipulations
we finally obtain
\begin{equation}\label{Lemma1:5}
(C_{k-i}+C_{k+i+1})\ket{\psi'}=\xi_iC_{k,k+1}\ket{\psi'},
\end{equation}
where we denoted $\xi_i=2\cos[(2i+1)\pi/2n]$ and $C_{k,k+1}=(C_k+C_{k+1})/[2\cos(\pi/2n)]$. Finally, setting $k=0$ in Eq. 
(\ref{Lemma1:4}) and $k=n$ in Eq. (\ref{Lemma1:2}) and subtracting the resulting equations one from another we have
\begin{equation}\label{Lemma1:6}
(C_{i+1}-C_{n-i})\ket{\psi'}=\xi_i C_{1,-n}\ket{\psi'},
\end{equation}
where we have denoted $C_{1,-n}=(C_1-C_n)/[2\cos(\pi/2n)]$.

Having all these auxiliary identities at hand, we are now in position to 
prove Eq. (\ref{Lemma1:e}). To this end, we first rewrite its left-hand side as
\begin{eqnarray}
\label{eq:ACeven2}
(A_1'A_{\frac{n}{2}+1}'+A_{\frac{n}{2}+1}'A_1')\ket{\psi'}&=&\left(A_1'B_{\frac{n}{2},\frac{n}{2}+1}' + A_{\frac{n}{2}+1}'B_{1,-n}' \right)\ket{\psi'}\nonumber\\
&=& \frac{1}{\xi_{\frac{n}{2}-1}}\left[A_1'(B_1' + B_{n}') + A_{\frac{n}{2}+1}'(B_{\frac{n}{2}}' - B_{\frac{n}{2}+1}') \right]\ket{\psi'},\nonumber\\
\end{eqnarray}
where the first equality was obtained with the aid of the identity (\ref{Exact:lem1:1})
for $i=n/2+1$,
%
%\begin{equation}\label{Lemma1:8}
%A_i\ket{\psi'}=\frac{B_i-B_{i-1}}{2\cos(\pi/2n)}\ket{\psi'}\qquad (i=1,\ldots,n),
%\end{equation} 
%
%stemming from the SOS decomposition (\ref{eq:genSOS1st}), 
%
while the second one follows from Eqs. (\ref{Lemma1:5}) and (\ref{Lemma1:6}). Then, 
the formulas (\ref{Exact:lem1:3}) and (\ref{Exact:lem1:4}) imply that 
%
%from the second SOS decomposition (\ref{eq:SOSnBell2ndorder}) we see that $(A_iB_i - A_{i+1}B_{i+1})\ket{\psi'} = 0$ and $(A_iB_{i-1} - A_{i+1}B_{i})%%%%%%\ket{\psi'} = 0$ for all $i$, which imply, respectively, that 
%
\begin{equation}\label{Lemma1:9}
(A_1'B_1' - A_{j+1}'B_{j+1}')\ket{\psi'} =\sum_{i=1}^{j}(A_i'B_i' - A_{i+1}'B_{i+1}')\ket{\psi'}= 0
\end{equation}
%'
and
\begin{equation}\label{Lemma1:10}
(A_1'B_{n}' + A_{j+1}'B_{j}')\ket{\psi'}=\sum_{i=1}^{j}(A_i'B_{i-1}' - A_{i+1}'B_{i}')\ket{\psi'} = 0
\end{equation}
hold for any $j=1,\ldots,n$. After setting $j=n/2$ in the latter identities and inserting them 
into Eq. (\ref{eq:ACeven2}) we eventually obtain
(\ref{Lemma1:e}).

\textbf{Odd number of measurements.} Before passing to the anticommutation relation (\ref{Lemma1:o}), we need some auxiliary relations for the measurements $A_i'$ and $B_i'$. In order to derive the first one, we shift $k\to k-1$ in Eq. (\ref{Lemma1:4}) and add the resulting equation to Eq. (\ref{Lemma1:2}), obtaining
\begin{equation}\label{Lemma1:11}
(C_{k+i}+C_{k-i})\ket{\psi'}=-2\frac{\beta_i}{\alpha_i}C_k-\frac{\gamma_i}{\alpha_i}(C_{k-1}+C_{k+1})\ket{\psi'}.
\end{equation}
Then, setting $i=1$ and shifting $j\to j-1$ in Eq. (\ref{Exact:lem1:2}) we arrive at
\begin{equation}\label{Lemma1:12}
(C_{j+1}+C_{j-1})\ket{\psi'}=2\cos\left(\frac{\pi}{n}\right)C_j\ket{\psi'},
\end{equation}
which after being plugged into Eq. (\ref{Lemma1:11}) gives rise to the following identity
\begin{equation}\label{Lemma1:13}
(C_{k+i}+C_{k-i})\ket{\psi'}=\zeta_iC_k\ket{\psi'},
\end{equation}
where $\zeta_i=2\cos(i\pi/n)$.

Then, by setting $j=(n-1)/2$ in Eqs. (\ref{Lemma1:9}) and (\ref{Lemma1:10}) and adding the resulting equations 
we obtain 
\begin{equation}\label{Lemma1:14}
A_1'(B_1'+B_n')\ket{\psi'}=A_{\frac{n+1}{2}}'(B_{\frac{n+1}{2}}'-B_{\frac{n-1}{2}}')\ket{\psi'},
\end{equation}
which can be further simplified by using Eq. (\ref{Lemma1:13}) with $i=(n-1)/2$ and $k=n$, giving
\begin{equation}\label{Lemma1:15}
A_1'(B_1'+B_n')\ket{\psi'}=\zeta_{\frac{n-1}{2}}A_{\frac{n+1}{2}}'B_n'\ket{\psi'}.
\end{equation}
Analogously, by setting $j=(n+1)/2$ in Eqs. (\ref{Lemma1:9}) and (\ref{Lemma1:10}) and adding them, one obtains
\begin{equation}\label{Lemma1:16}
A_1'(B_1'+B_n')\ket{\psi'}=A_{\frac{n+3}{2}}'(B_{\frac{n+3}{2}}'-B_{\frac{n+1}{2}}')\ket{\psi'},
\end{equation}
which, after application of Eq. (\ref{Lemma1:13}) with $i=(n-1)/2$ and $k=n+1$, further simplifies to 
\begin{equation}\label{Lemma1:17}
A_1'(B_1'+B_n')\ket{\psi'}=-\zeta_{\frac{n-1}{2}}A_{\frac{n+3}{2}}'B_1'\ket{\psi'}.
\end{equation}

Now, we can rewrite the left-hand side of the anticommutation relation
Eq. (\ref{Lemma1:o}) as
\begin{eqnarray}
\label{eq:ACodd1}
\fl\left\{ A_1',A_{\frac{n+1}{2}}' + A_{\frac{n+3}{2}}' \right\} \ket{\psi'} &=& \frac{1}{2\cos{\case{\pi}{2n}}}\left[A_1'(B_{\frac{n-1}{2}}' + 2B_{\frac{n+1}{2}}' + B_{\frac{n+3}{2}}') + (A_{\frac{n+1}{2}}' + A_{\frac{n+3}'{2}})(B_1' - B_{n}') \right]\ket{\psi'}\nonumber\\
&&\hspace{-2cm}=\frac{1}{2\cos\case{\pi}{2n}}\left[A_1'\left(B_{\frac{n-1}{2}}' + B_{\frac{n+3}{2}}' + 2\frac{B_1' + B_{n}'}{\zeta_{(n-1)/2}}\right)+ (A_{\frac{n+1}{2}}' + A_{\frac{n+3}{2}}')(B_1' - B_{n}') \right]\ket{\psi'},\nonumber\\
\end{eqnarray}
where first equality stems from Eq. (\ref{Exact:lem1:1}) and to obtain the second one 
we have utilized Eq. (\ref{Lemma1:13}) with $i=(n-1)/2$ and $k=(n+1)/2$. 
Then, expressions (\ref{Lemma1:15}) and (\ref{Lemma1:17}) lead us to
\begin{equation}
\label{eq:ACodd2}
\fl\left\{ A_1',A_{\frac{n+1}{2}}' + A_{\frac{n+3}{2}}' \right\} \ket{\psi'} = \frac{1}{2\cos{(\pi/2n)}}(A_1'B_{\frac{n-1}{2}}' + A_1'B_{\frac{n+3}{2}}' + A_{\frac{n+1}{2}}'B_{1}' - A_{\frac{n+3}{2}}'B_{n}')\ket{\psi'}.
\end{equation}
Exploiting once more Eq. (\ref{Lemma1:13}) one obtains the following equalities
\begin{equation}
A_1'\ket{\psi'} = \frac{1}{\zeta_{\frac{n-1}{2}}}(A_{\frac{n+1}{2}}'-A_{\frac{n+3}{2}}')\ket{\psi'},\quad 
B_1'\ket{\psi'} = \frac{1}{\zeta_{\frac{n-1}{2}}}(B_{\frac{n+1}{2}}'-B_{\frac{n+3}{2}}')\ket{\psi'},
\end{equation}
and
\begin{equation}
B_{n}'\ket{\psi'} = \frac{1}{\zeta_{\frac{n-1}{2}}}(B_{\frac{n+1}{2}}'-B_{\frac{n-1}{2}}')\ket{\psi'}
\end{equation}
whose application to Eq. (\ref{eq:ACodd2}) allows one to rewrite it as
\begin{eqnarray}
\label{eq:ACodd22}
\fl\left\{ A_1',A_{\frac{n+1}{2}}' + A_{\frac{n+3}{2}}' \right\} \ket{\psi'} = \frac{1}{2\zeta_{\frac{n-1}{2}}\cos{\case{\pi}{2n}}}\left(A_{\frac{n+1}{2}}'B_{\frac{n-1}{2}}' -A_{\frac{n+3}{2}}'B_{\frac{n+1}{2}}'+A_{\frac{n+1}{2}}'B_{\frac{n+1}{2}}'-A_{\frac{n+3}{2}}'B_{\frac{n+3}{2}}'\right)\ket{\psi'}.\nonumber\\
\end{eqnarray}
To complete the proof it suffices to make use of the equalities (\ref{Exact:lem1:3}) and (\ref{Exact:lem1:4})
with $j=(n+1)/2$.
\end{proof}

\begin{lem}
\label{Structure}
Let $\{\ket{\psi'},A'_i,B'_i\}$ realize the maximal quantum violation of the chained Bell inequality. 
Then, for even $n$:
\begin{eqnarray}
%\label{eq:ABstructure}
\label{eq:ABstructure1}A'_i\ket{\psi'} &=& \left(s_iA'_{\frac{n}{2}+1} + c_iA'_1\right)\ket{\psi'}, \\
\label{eq:ABstructure2}B'_i\ket{\psi'} &=& \left(s_i'B'_{\frac{n}{2},\frac{n}{2}+1} + c_i'B'_{1,-n}\right)\ket{\psi'},
\end{eqnarray}
while for odd $n$:
\begin{eqnarray}
%\label{eq:ABstructure}
\label{eq:ABstructure1o}A_i'\ket{\psi'} &=& \left\{s_iA_{\frac{n+1}{2},\frac{n+3}{2}}' + c_iA_1'\right\}\ket{\psi'}, \\
\label{eq:ABstructure2o}B_i'\ket{\psi'} &=& \left\{s_i'B_{\frac{n+1}{2}}' + c_i'B_{1,-n}'\right\}\ket{\psi'},
\end{eqnarray}
are valid for any $i=1,\ldots,n$. Symbols $s_i$, $c_i$, $s_i'$ and $c_i'$ are defined in Eq. (\ref{eq:ChainedMeasurementsA}).
\end{lem}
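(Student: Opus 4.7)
The plan is to derive a three-term Chebyshev-type recurrence satisfied by the vectors $C'_j\ket{\psi'}$ (for $C=A,B$), solve it in closed form, and then match initial conditions against the operators singled out in the statement.

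First, I specialize Eq.~(\ref{Exact:lem1:2}) to $i=1$. A direct check from the explicit coefficients (\ref{eq:coefficientsa})--(\ref{eq:coefficientsg}) shows $\alpha_1=\gamma_1$ and $-\beta_1/\alpha_1=2\cos(\pi/n)$, so the identity collapses to
\begin{equation}
(C'_{j-1}+C'_{j+1})\ket{\psi'}=2\cos(\pi/n)\,C'_j\ket{\psi'},
\end{equation}
valid for every integer $j$ (under the conventions $C'_{n+k}=-C'_k$, $C'_{-k}=-C'_{n-k}$) and for both $C=A$ and $C=B$. The general solution of this linear recursion is
\begin{equation}
C'_j\ket{\psi'}=\cos\!\left(\frac{(j-1)\pi}{n}\right)\vec{a}+\sin\!\left(\frac{(j-1)\pi}{n}\right)\vec{b},
\end{equation}
for two Hilbert-space vectors $\vec{a},\vec{b}$ fixed by any two initial conditions; this ansatz is a one-line check using the standard product-to-sum identities. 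The rest of the argument is the clever choice of those two conditions.

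For even $n$ and $C=A'$, evaluating at $j=1$ and $j=n/2+1$ (and using $\cos(\pi/2)=0$, $\sin(\pi/2)=1$) immediately identifies $\vec{a}=A'_1\ket{\psi'}$ and $\vec{b}=A'_{n/2+1}\ket{\psi'}$, which when substituted back gives (\ref{eq:ABstructure1}). For $C=B'$, the closed form yields expressions for $B'_{n/2,n/2+1}\ket{\psi'}$ and $B'_{1,-n}\ket{\psi'}$---after using $B'_0=-B'_n$ together with the half-angle identities $1+\cos(\pi/n)=2\cos^2(\pi/(2n))$ and $\sin(\pi/n)=2\sin(\pi/(2n))\cos(\pi/(2n))$---as a rotation by $\pi/(2n)$ of the pair $(B'_1\ket{\psi'},\vec{b}_B)$. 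Inverting this rotation expresses $\vec{a}_B$ and $\vec{b}_B$ in terms of $B'_{1,-n}\ket{\psi'}$ and $B'_{n/2,n/2+1}\ket{\psi'}$, and the angle-addition identity $\cos((j-1)\pi/n+\pi/(2n))=\cos((2j-1)\pi/(2n))$ then yields (\ref{eq:ABstructure2}).

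For odd $n$ the same strategy applies, with $A$ and $B$ exchanging roles near the $X$-axis: $B'_{(n+1)/2}$ already sits at angle $\pi/2$ (since $\phi'_{(n+1)/2}=\pi/2$), so evaluating the closed form at $j=1$ and $j=(n+1)/2$ identifies the $B$-analogues of $\vec{a},\vec{b}$ in terms of $B'_{1,-n}\ket{\psi'}$ and $B'_{(n+1)/2}\ket{\psi'}$, yielding (\ref{eq:ABstructure2o}). For $C=A'$ the sum $A'_{(n+1)/2}+A'_{(n+3)/2}$ isolates $\vec{b}_A$ directly, as the $\vec{a}_A$-components cancel by $\cos((n-1)\pi/(2n))+\cos((n+1)\pi/(2n))=0$; this gives $\vec{b}_A=A'_{(n+1)/2,(n+3)/2}\ket{\psi'}$ and, together with $\vec{a}_A=A'_1\ket{\psi'}$, produces (\ref{eq:ABstructure1o}). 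The main obstacle is purely bookkeeping---tracking the sign conventions $C_{n+j}=-C_j$, $C_{-j}=-C_{n-j}$ and applying the sum-to-product identities at the right moments---but once the recurrence is in hand the whole argument is just linear algebra in the two-dimensional plane spanned by $\vec{a}$ and $\vec{b}$, which corresponds geometrically to the $XZ$-plane of the reference experiment.
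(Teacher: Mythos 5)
Your proof is correct, and it takes a genuinely different (and leaner) route than the paper's. The paper proves this lemma by combining many instances of the first-order SOS consequences (\ref{Exact:lem1:2}) over all $i$: it first builds the auxiliary identities (\ref{Lemma1:5}), (\ref{Lemma1:6}) and (\ref{Lemma1:13}) (the relations $(C_{k-i}+C_{k+i+1})\ket{\psi'}=\xi_iC_{k,k+1}\ket{\psi'}$, $(C_{i+1}-C_{n-i})\ket{\psi'}=\xi_i C_{1,-n}\ket{\psi'}$ and $(C_{k+i}+C_{k-i})\ket{\psi'}=\zeta_iC_k\ket{\psi'}$), then adds suitably shifted pairs of them; this forces a case-by-case discussion of the index values for which the coefficients $\alpha_i,\beta_i,\gamma_i$ are undefined. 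You instead use only the $i=1$ instance of (\ref{Exact:lem1:2}) --- which is the paper's intermediate identity (\ref{Lemma1:12}) --- recognize it as a Chebyshev-type three-term recurrence for the vectors $C'_j\ket{\psi'}$, write down its general solution $\cos[(j-1)\pi/n]\,\vec a+\sin[(j-1)\pi/n]\,\vec b$ (which is legitimate: the recurrence propagates from two initial vectors, is consistent with the anti-periodicity conventions, and the $2\times2$ initial-condition matrix is invertible for $n\ge2$), and then identify $\vec a,\vec b$ with the operators in the statement by evaluating at two indices and, for the $B$ side, inverting a rotation by $\pi/2n$. Your identifications check out in all four cases (I verified the half-angle and angle-addition steps, including the cancellation $\cos[(n-1)\pi/2n]+\cos[(n+1)\pi/2n]=0$ that isolates $\vec b_A$ in the odd case). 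What the paper's route buys is reuse of identities it needs anyway for the anticommutation Lemma \ref{Exact:Anti}; what yours buys is self-containedness, the elimination of the edge-case bookkeeping around undefined coefficients, and a transparent explanation of why all the vectors $C'_j\ket{\psi'}$ live in a single two-dimensional ``$XZ$ plane'' spanned by $\vec a$ and $\vec b$. One caveat: the closed-form-solution trick relies on exact kernel conditions and would not transfer as directly to the robust (approximate) version of this lemma (Lemma \ref{lastlemma}), where the paper's pairwise-combination strategy is the one that generalizes.
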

\begin{proof}Let us begin with the even $n$ case. By setting 
$k=1+n/2$ and shifting $i\to 1-i+n/2$ in Eq. (\ref{Lemma1:13}) one obtains
\begin{equation}\label{Lemma2:1}
(C_i-C_{2-i})\ket{\psi'}=\zeta_{\frac{n}{2}+1-i}C_{\frac{n}{2}+1}\ket{\psi'}.
\end{equation}
for $i=1,\ldots,n/2$, where we have additionally exploited the fact that $C_{n+i}=-C_i$ and $C_{-i} = -C_{n-i}$ for any $i$. To prove Eq. (\ref{Lemma2:1}) for $i=n/2+1,\ldots,n/2$ one has to use (\ref{Lemma1:2}) but coefficients $\alpha_i$, $\beta_i$ and $\gamma_i$ are not defined for $i < 0$. However, once Eq. (\ref{Lemma2:1}) is derived for $i < n/2 +1$, it is easy to note that the cases when $i > n/2 + 1$ are already contained in the proof. This is due to the fact that any expression obtained when $i > n/2 + 1$, is the same as the expression proved for  $n+2-i < n/2 + 1$.\\
On the other hand, fixing $k=1$ and shifting $i\to i-1$ in Eq. (\ref{Lemma1:13}), one can deduce the following
equality
\begin{equation}\label{Lemma2:2}
(C_i+C_{2-i})\ket{\psi'}=\zeta_{i-1}C_1\ket{\psi'}. 
\end{equation}
with $i=2,\ldots n$. For $i=1$ the equation is trivial. Adding Eqs. (\ref{Lemma2:1}) and (\ref{Lemma2:2}) and recalling that $\zeta_i=2\cos(i\pi/n)$
one obtains Eq. (\ref{eq:ABstructure1}).

In order to prove the second identity (\ref{eq:ABstructure2}), 
we fix $k=n/2$ and shift $i\to n/2-i$ in Eq. (\ref{Lemma1:5}) which leads us to 
\begin{equation}\label{Lemma2:3}
(C_i+C_{n-i+1})\ket{\psi'}=\xi_{\frac{n}{2}-i}C_{\frac{n}{2},\frac{n}{2}+1}\ket{\psi'}.
\end{equation}
This equation is satisfied for all $i =  1, \ldots , n$, but it could formally be derived only when $i < n/2$. The cases $i = n/2, n/2+1$ are trivially satisfied. Similarly to the discussion following Eq. (\ref{Lemma2:1}) it is easy to check that for every $i > n/2+1$ Eq. (\ref{Lemma2:3}) is the same as for the case $n+1-i < n/2$, which has been formally proven.

Now we note that by shifting $i\to i-1$ in Eq. (\ref{Lemma1:6}), one obtains the following equation
\begin{equation}\label{Lemma2:4}
(C_i-C_{n-i+1})\ket{\psi'}=\xi_{i-1}C_{1,-n}\ket{\psi'}, 
\end{equation}
which when combined with Eq. (\ref{Lemma2:3}) directly implies Eq. (\ref{eq:ABstructure2}), completing the proof.

Now we move to the odd $n$ case. First in Eq. (\ref{Lemma1:5}) we fix $k = (n+1)/2$ and shift $i \rightarrow (n+1)/2 - i$ to get
\begin{equation}\label{Lemma2:5}
(C_i + C_{n+2-i})\ket{\psi'} = \xi_{\frac{n+1}{2} - i}C_{\frac{n+1}{2},\frac{n+3}{2}}\ket{\psi'}.
\end{equation}
This equation is consistent for all $i = 1 , \dots , n$, with the clarification exactly the same as in the discussion following Eq. (\ref{Lemma2:3}). Next step is to plug $k = 1$ and $i \rightarrow i -1$ in Eq. (\ref{Lemma1:13}) which together with $C_{2-i} = -C_{n+2-i}$ gives
\begin{equation}\label{Lemma2:6}
(C_i - C_{n+2-i})\ket{\psi'} = \zeta_{i-1}C_{1}\ket{\psi'}
\end{equation}
By adding Eqs. (\ref{Lemma2:5}) and (\ref{Lemma2:6}) and using some elementary trigonometric identities we obtain \ref{eq:ABstructure1o}.
We proceed by fixing $k = (n+1)/2$ and shifting $i \rightarrow (n+1)/2 - i$ in Eq. (\ref{Lemma1:13}) to obtain
\begin{equation}\label{Lemma2:7}
(C_i + C_{n+1-i})\ket{\psi'} = \zeta_{\frac{n+1}{2} - i}C_{\frac{n+1}{2}}\ket{\psi'},
\end{equation}
satisfied for all $i = 1,\dots , n$ in the same way as Eq. (\ref{Lemma2:1}). To get Eq. (\ref{eq:ABstructure2o}) and complete the proof to Eq. (\ref{Lemma2:7}) we add
\begin{equation}\label{Lemma2:8}
(C_i - C_{n+1-i})\ket{\psi'} = \xi_{i-1}C_{1,-n}\ket{\psi'}
\end{equation}
 which is obtained by shifting $i \rightarrow i-1$ in Eq. (\ref{Lemma1:6})
\end{proof}

\section{Robustness}\label{app:rob}

Here we present detailed proofs of the relations exploited in Section \ref{3}.
We begin with the approximate version of Lemma \ref{Exact:lem1}.

\begin{lem} 
\label{lemmaRobust}
Let $\ket{\psi'}$ and $\{A_i', B_i'\}$ be the state and the measurements violating the chained Bell inequality by $B_n^{\max} - \epsilon$. Then, the following relations are satisfied:
\begin{equation}\label{eq:rob1}
\| (A_i' - B'_{i-1,i})\ket{\psi'} \| \leq \sqrt{\frac{\varepsilon}{\cos(\pi/2n)}}\equiv\sqrt{\epsilon_1}
\end{equation}
for $i=1,\ldots,n$,
\begin{equation}\label{eq:rob2}
\| (\alpha_{i} B_{j}' + \beta_{i} B_{i + j}' + \gamma_{i} B_{ i + j + 1}')\ket{\psi'} \| \leq \sqrt{\varepsilon_1}
\end{equation}
for $i=1,\ldots,n-2$ and $j=1,\ldots,n$, and
\begin{eqnarray}
\label{eq:rob3}
\| (A_i' \otimes B_i' - A_{i+1}' \otimes B_{i+1}')\ket{\psi'} \| \leq \sqrt{8n\cos\frac{\pi}{2n}\epsilon}\equiv\sqrt{n\varepsilon_2},\\
\label{eq:rob4}
\| (A_i' \otimes B_{i-1}' - A_{i+1}' \otimes B_{i}')\ket{\psi'} \| \leq \sqrt{n\varepsilon_2}
\end{eqnarray}
for $i=1,\ldots,n$.
\end{lem}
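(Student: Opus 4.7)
The plan is to exploit the two SOS decompositions from Lemmas \ref{Lem1storder} and \ref{eq:SOSnBell2ndorder} directly. The overarching observation is that since the shifted Bell operator $B_n^{\max}\mathbbm{1} - \mathcal{B}_n$ is positive-semidefinite and is written as a sum of terms of the form $c_k\, P_k^\dagger P_k$ with $c_k \geq 0$, the hypothesis $\langle\psi'|(B_n^{\max}\mathbbm{1} - \mathcal{B}_n)|\psi'\rangle = \varepsilon$ yields
\begin{equation}
\sum_k c_k\, \|P_k\ket{\psi'}\|^2 = \varepsilon,
\end{equation}
so that each individual summand obeys $c_k \|P_k\ket{\psi'}\|^2 \leq \varepsilon$. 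Reading off the coefficient in front of a given square gives an immediate norm bound on the corresponding vector $P_k\ket{\psi'}$. All four claims reduce to identifying the right summand.

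For (\ref{eq:rob2}), the first-order SOS (\ref{eq:genSOS1st}) contains the family $(\cos(\pi/2n)/n)\sum_{j,i}(\alpha_i B_j + \beta_i B_{i+j} + \gamma_i B_{i+j+1})^2$; however, as the remark following Lemma \ref{Lem1storder} emphasises, one in fact has a valid SOS for \emph{each fixed} $j$, without the $1/n$ averaging. For any such fixed $j$, the coefficient in front of the $i$-th square is $\cos(\pi/2n)$, so $\|(\alpha_i B_j' + \beta_i B_{i+j}' + \gamma_i B_{i+j+1}')\ket{\psi'}\|^2 \leq \varepsilon/\cos(\pi/2n) = \varepsilon_1$, which is (\ref{eq:rob2}). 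For (\ref{eq:rob1}), the same SOS contains $\cos(\pi/2n) \sum_i (\mathbbm{1} - A_i'\otimes B'_{i-1,i})^2$, so $\|(\mathbbm{1} - A_i'\otimes B'_{i-1,i})\ket{\psi'}\|^2 \leq \varepsilon_1$. The small conversion I need is that $A_i'$ is unitary and acts on a different factor than $B'_{i-1,i}$, so multiplying the vector $(\mathbbm{1} - A_i'\otimes B'_{i-1,i})\ket{\psi'}$ by the unitary $A_i'\otimes\mathbbm{1}$ does not change its norm and turns it into $(A_i' - B'_{i-1,i})\ket{\psi'}$, yielding (\ref{eq:rob1}).

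For (\ref{eq:rob3}) and (\ref{eq:rob4}), the second-order SOS (\ref{eq:SOSnBell2ndorder}) contains the line
\begin{equation}
\frac{1}{8n\cos(\pi/2n)}\sum_{i=1}^{n}\left[(A_i\otimes B_i - A_{i+1}\otimes B_{i+1})^2 + (A_i\otimes B_{i-1} - A_{i+1}\otimes B_i)^2\right],
\end{equation}
so each square on the right-hand side gives
\begin{equation}
\|(A_i'\otimes B_i' - A_{i+1}'\otimes B_{i+1}')\ket{\psi'}\|^2 \leq 8n\cos(\pi/2n)\,\varepsilon = n\varepsilon_2,
\end{equation}
and likewise for the shifted version, which are exactly (\ref{eq:rob3}) and (\ref{eq:rob4}).

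There is no real obstacle here: the only potentially delicate step is the unitary rotation trick used to convert $\mathbbm{1} - A_i'\otimes B'_{i-1,i}$ into $A_i' - B'_{i-1,i}$ in (\ref{eq:rob1}), and the fact that one must invoke the single-$j$ variant of the first SOS rather than the averaged one to obtain (\ref{eq:rob2}) without an extra factor of $n$. Both are essentially bookkeeping rather than analytical difficulties; all the analytic work has already been done in establishing the two SOS decompositions.
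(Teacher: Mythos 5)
Your proposal is correct and follows essentially the same route as the paper: read off each target expression as a (non-negatively weighted) square in one of the two SOS decompositions, use $\langle\psi'|(B_n^{\max}\mathbbm{1}-\mathcal{B}_n)|\psi'\rangle=\varepsilon$ to bound each summand by $\varepsilon$, and divide by the coefficient. The two points you flag — invoking the fixed-$j$ variant of the first SOS to avoid an extra factor of $n$ in (\ref{eq:rob2}), and the unitary multiplication by $A_i'\otimes\mathbbm{1}$ to pass from $\mathbbm{1}-A_i'\otimes B'_{i-1,i}$ to $A_i'-B'_{i-1,i}$ — are exactly the details the paper's terser proof leaves implicit.
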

\begin{proof}
All equations follow directly from SOS decompositions. When a chained Bell inequality is violated by $2n\cos[\pi/2n] - \epsilon$, from (\ref{eq:BellSOS}) it follows that $\sum_i \bra{\psi'}P_i^2\ket{\psi'} = \epsilon$ and consequently $\vert \vert P_i\ket{\psi'} \vert \vert \leq \sqrt{\epsilon}$ for all $i$. The expressions given by equations (\ref{eq:rob1}) and (\ref{eq:rob2}) are identified in the first degree SOS decomposition (\ref{eq:genSOS1st}) (note the explanation after the equation), while the expressions bounded in equations (\ref{eq:rob3}) and (\ref{eq:rob4}) are the part of the second degree SOS decomposition (\ref{eq:SOSnBell2ndorder}).
\end{proof}

We can then prove the approximate version of Lemma \ref{Exact:Anti} .

\begin{lem}\label{Rob:lem1}
Let $\{\ket{\psi'},A_i',B_i'\}$ be the state and the measurements violation the chained Bell inequality by $B_n^{\max}-\varepsilon$. Then, the following approximate 
anticommutation relations are true
\begin{equation}\label{Rob:lem1:1}
\|\{A_{1}',A_{\frac{n}{2}+1}'\}\ket{\psi'}\|\leq \sqrt{2\varepsilon_1}+\frac{1}{\xi_{n/2-1}}\left(\frac{4\sqrt{\epsilon_1}}{\alpha_{n/2-1}}+n\sqrt{2\varepsilon_2}\right) = \omega_{\mathrm{ev}}
\end{equation}
for even $n$, and 
\begin{eqnarray}
\label{Rob:lem1:1a}
\|\{A_1',A_{\frac{n+1}{2}}'+A_{\frac{n+3}{2}}'\}\ket{\psi'}\|&\leq& 2\sqrt{\varepsilon_1 n}\left(\frac{\sqrt{2}}{\zeta_{(n-1)/2}}+\sqrt{n-1}\right)
+\sqrt{\varepsilon_1}(1+\sqrt{2})
\nonumber\\
&&+\frac{3\sqrt{\varepsilon_1}}{\cos\case{\pi}{2n}\alpha_{(n-1)/2}\zeta_{(n-1)/2}}
\left(2+\frac{\gamma_{(n-1)/2}}{\alpha_1}\right) = \omega_{\mathrm{odd}}\nonumber\\
\end{eqnarray}
for odd $n$. For any fixed $n$ the right-hand sides of both inequalities vanish if 
$\varepsilon\to 0$ and for sufficiently large $n$ both functions scale quadratically with $n$.
\end{lem}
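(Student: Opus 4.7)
The plan is to imitate, step by step, the exact-case derivation from Lemma~\ref{Exact:Anti}, replacing every identity of the form $X\ket{\psi'}=Y\ket{\psi'}$ by its robust analogue $\|(X-Y)\ket{\psi'}\|\leq\delta$ coming from Lemma~\ref{lemmaRobust}, and accumulating the errors via the triangle inequality together with the fact that the $A_i'$ and $B_i'$ are unitary (so multiplying a vector by them does not change its norm). Since the exact proof amounts to a linear chain of substitutions acting on $\ket{\psi'}$, each substitution contributes an additive error and the final bound is simply the sum of those errors.

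For the even-$n$ case I would follow three stages mirroring the exact proof. First, apply the approximate relation (\ref{eq:rob1}) with $i=1$ and $i=n/2+1$ to rewrite $\{A_1',A_{n/2+1}'\}\ket{\psi'}$ in terms of the combinations $B_{n/2,n/2+1}'$ and $B_{1,-n}'$, at a cost of order $\sqrt{\varepsilon_1}$. Second, derive robust counterparts of (\ref{Lemma1:5}) and (\ref{Lemma1:6}) by adding two shifted instances of (\ref{eq:rob2}); dividing the resulting inequality by $\alpha_{n/2-1}$ replaces $B_{n/2,n/2+1}'\ket{\psi'}$ and $B_{1,-n}'\ket{\psi'}$ by $(B_1'+B_n')\ket{\psi'}/\xi_{n/2-1}$ and $(B_{n/2}'-B_{n/2+1}')\ket{\psi'}/\xi_{n/2-1}$, producing the $4\sqrt{\varepsilon_1}/[\alpha_{n/2-1}\xi_{n/2-1}]$ term visible in (\ref{Rob:lem1:1}). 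Third, use the robust analogues of the telescoped identities (\ref{Lemma1:9}) and (\ref{Lemma1:10}) with $j=n/2$ to cancel the remainder.

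The subtle point is the third stage. A naive triangle-inequality bound on $\sum_{i=1}^{n/2}\|(A_i'B_i'-A_{i+1}'B_{i+1}')\ket{\psi'}\|$ using the per-term estimate (\ref{eq:rob3}) would give $(n/2)\sqrt{n\varepsilon_2}$, and hence an overall $n^{3/2}\sqrt{\varepsilon}$ bound. To reach the $n\sqrt{\varepsilon_2}$ scaling stated in the lemma, I would instead exploit the stronger global SOS identity $\sum_{i=1}^{n}\|(A_i'B_i'-A_{i+1}'B_{i+1}')\ket{\psi'}\|^2\leq n\varepsilon_2$ (and its analogue for the $B_{i-1}'$ terms) and apply Cauchy-Schwarz, obtaining $(n/\sqrt{2})\sqrt{\varepsilon_2}$ per telescope; the two telescopes then combine into the $n\sqrt{2\varepsilon_2}/\xi_{n/2-1}$ contribution appearing in (\ref{Rob:lem1:1}).

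For the odd case the strategy is identical but with an extra substitution layer, since $A_{(n+1)/2}'+A_{(n+3)/2}'$ replaces the single $A_{n/2+1}'$ of the even case. One also needs a robust version of (\ref{Lemma1:13}), obtained by combining the robust (\ref{Lemma1:11}) with the robust (\ref{Lemma1:12}); this is what introduces the factor $\gamma_{(n-1)/2}/\alpha_1$ appearing in (\ref{Rob:lem1:1a}). The main obstacle throughout is not conceptual but careful bookkeeping: the prefactors $1/\alpha_i$, $1/\xi_i$ and $1/\zeta_i$ each scale like $n$ for the relevant indices, so one must verify that, once all error contributions are summed, the final bound grows no faster than $n^2\sqrt{\varepsilon}$, as asserted in the last sentence of the lemma.
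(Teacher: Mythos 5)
Your proposal follows essentially the same route as the paper's proof in \ref{app:rob}: robustify each substitution of Lemma \ref{Exact:Anti} via Lemma \ref{lemmaRobust} and the triangle inequality, derive the approximate versions of (\ref{Lemma1:5}), (\ref{Lemma1:6}) and (\ref{Lemma1:13}) by combining shifted instances of (\ref{eq:rob2}), and — crucially, as you correctly identify — bound the telescoped sums by Cauchy--Schwarz against the global SOS bound $\sum_i\|(A_i'B_i'-A_{i+1}'B_{i+1}')\ket{\psi'}\|^2\leq n\varepsilon_2$ to obtain $\sqrt{jn\varepsilon_2}$ rather than the lossy $j\sqrt{n\varepsilon_2}$, which is exactly how the paper arrives at the $n\sqrt{2\varepsilon_2}/\xi_{n/2-1}$ term. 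The error bookkeeping you outline, including the origin of the $4\sqrt{\varepsilon_1}/(\alpha_{n/2-1}\xi_{n/2-1})$ term and the $\gamma_{(n-1)/2}/\alpha_1$ factor in the odd case, matches the paper's computation.
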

\begin{proof}The proof goes along the same lines as that of Lemma \ref{Exact:Anti}, however, at each step we need to take into account the error stemming from the fact that now the Bell inequality is not violated maximally. We prove the cases of 
even and odd $n$ separately.

\textit{Even $n$.} We first need to prove the approximate versions of the identities (\ref{Lemma1:5}) and (\ref{Lemma1:6}). By substituting $j=k-i$ in (\ref{eq:rob2})
we obtain
\begin{equation}\label{antirob1}
\|(\alpha_iC_{k-i}+\beta_i C_k+\gamma_iC_{k+1})\ket{\psi'}\|\leq \sqrt{\varepsilon_1}
\end{equation}
Then, by shifting $i\to n-i-1$ and setting $j=k+i+1$ in (\ref{eq:rob2}), 
we have
\begin{equation}\label{antirob2}
\|(\alpha_iC_{k+i+1}+\gamma_i C_k+\beta_iC_{k+1})\ket{\psi'}\|\leq \sqrt{\varepsilon_1}.
\end{equation}
Both inequalities imply 
\begin{equation}\label{antirob3}
\|(C_{k-i}+C_{k+i+1}-\xi_iC_{k,k+1})\ket{\psi'}\|\leq \frac{2\sqrt{\varepsilon_1}}{\alpha_i}
\end{equation}
for any $k =1,\ldots,n$ and $i =1,\ldots,n-2$. The case when $i = n-1$ or $i = n$ are trivial because they represent the definition of  $C_{k,k+1}$.
Then, by using Eq. (\ref{antirob1}) with $k=n$ and Eq. (\ref{antirob2}) with $k=0$, one can prove the following inequality
\begin{eqnarray}\label{antirob4}
\|(C_{i+1}-C_{n-i}-\xi_iC_{1,-n})\ket{\psi'}\|\leq \frac{2\sqrt{\varepsilon_1}}{\alpha_i}
\end{eqnarray}
with $i=1,\ldots,n-2$.
Now, one has 
\begin{eqnarray}
\|\{A_1',A_{\frac{n}{2}+1}'\}\ket{\psi'}\|&=&\|(A_1'A_{\frac{n}{2}+1}'+A_{\frac{n}{2}+1}'A_1')\ket{\psi'}\|\nonumber\\
&\leq &\|(A_1'B_{\frac{n}{2},\frac{n}{2}+1}'+A_{\frac{n}{2}+1}'B_{1,-n}')\ket{\psi'}\|+\sqrt{2\varepsilon_1},
\end{eqnarray}
which with the aid of Eq. (\ref{antirob3}) with $k=n/2$ and $i=n/2-1$
and Eq. (\ref{antirob4}) with $i=n/2-1$, can be further upper bounded as
\begin{eqnarray}\label{antiIneq0}
\fl\left\|\{A_1',A_{\frac{n}{2}+1}'\}\ket{\psi'}\right\|&\leq &\frac{1}{\xi_{n/2-1}}\left\|\left[A_1'(B_1'+B_n')+A_{\frac{n}{2}+1}'(B_{\frac{n}{2}}'-B_{\frac{n}{2}+1}')\right]\ket{\psi'}\right\|+\frac{1}{\xi_{n/2-1}}\frac{4\sqrt{\varepsilon_1}}{\alpha_{n/2-1}}\nonumber\\
&\leq&\frac{1}{\xi_{n/2-1}}\left[\left\|(A_1'B_1'-A_{\frac{n}{2}+1}'B_{\frac{n}{2}+1}')\ket{\psi'}\right\|+\left\|(A_1'B_n'+A_{\frac{n}{2}+1}'B_{\frac{n}{2}}')\ket{\psi'}\right\|\right]\nonumber\\
&&+\frac{1}{\xi_{n/2-1}}\frac{4\sqrt{\varepsilon_1}}{\alpha_{n/2-1}}.
\end{eqnarray}
To upper bound the above two terms, we will use approximate versions of 
Eqs. (\ref{Lemma1:9}) and (\ref{Lemma1:10}) First, it follows from the SOS decomposition that for any $j=1,\ldots,n$:
\begin{equation}
\sum_{i=1}^{j}\left\|(A_i'B_i'-A_{i+1}'B_{i+1}')\ket{\psi'}\right\|^2\leq n\varepsilon_2,
\end{equation}
which by virtue of the triangle inequality for the norm and concavity of the square root implies 
\begin{eqnarray}\label{antiIneq1}
\left\|(A_1'B_1'-A_{j+1}'B_{j+1}')\ket{\psi'}\right\|&=&\left\|\sum_{i=1}^{j}(A_i'B_i'-A_{i+1}'B_{i+1}')\ket{\psi'}\right\|\nonumber\\
&\leq &\sum_{i=1}^{j}\left\|(A_i'B_i'-A_{i+1}'B_{i+1}')\ket{\psi'}\right\|\nonumber\\
&\leq & \sqrt{j}\sqrt{\sum_{i=1}^{j}\left\|(A_i'B_i'-A_{i+1}'B_{i+1}')\ket{\psi'}\right\|^2}\nonumber\\
&\leq & \sqrt{jn\varepsilon_2}.
%
%2\sqrt{2}\sqrt{j}\sqrt{n\varepsilon\cos\case{\pi}{2n}}.
\end{eqnarray}
Analogously, the SOS decomposition (\ref{eq:SOSnBell2ndorder})
implies that  
\begin{equation}
\sum_{i=1}^{j}\left\|(A_i'B_{i-1}'-A_{i+1}'B_{i}')\ket{\psi'}\right\|^2\leq \sqrt{n\varepsilon_2},
\end{equation}
from which, by using similar arguments as above, one infers that 
\begin{eqnarray}\label{antiIneq2}
\left\|(A_1'B_n'+A_{j+1}'B_{j}')\ket{\psi'}\right\|=\sum_{i=1}^{j}\left\|(A_i'B_{i-1}'-A_{i+1}'B_{i}')\ket{\psi'}\right\|\leq\sqrt{jn\varepsilon_2}.
\end{eqnarray}
Substituting $j=n/2$ and applying both inequalities (\ref{antiIneq1}) and (\ref{antiIneq2}) to (\ref{antiIneq0})
one finally obtains (\ref{Rob:lem1:1}).

\textbf{Odd number of measurements.} We first prove the following inequality
\begin{equation}\label{antiIneq3}
\|(C_{k-i}+C_{k+i}-\zeta_iC_k)\ket{\psi'}\|\leq \left(2+\frac{\gamma_i}{\alpha_1}\right)\frac{\sqrt{\varepsilon_1}}{\alpha_i}
\end{equation}
for any $i =1,\ldots,n-2$. Then, from inequalities (\ref{antiIneq1}) and (\ref{antiIneq2}) with $j=(n-1)/2$, 
and inequality (\ref{antiIneq3}) for $i=(n-1)/2$ and $k=n$, one obtains
\begin{eqnarray}
\left\|\left[A_1'(B_1'+B_n')-\zeta_{\frac{n-1}{2}}A_{\frac{n+1}{2}}'B_n'\right]\ket{\psi'}\right\|&\leq& \sqrt{2n(n-1)\varepsilon_2}+\varepsilon',
%
%\frac{\sqrt{\varepsilon_1}}{\alpha_{(n-1)/2}}\left(2+\frac{\gamma_{(n-1)/2}}{\alpha_1}\right).\nonumber\\
\end{eqnarray}
where we denoted 
\begin{equation}
\varepsilon'=\frac{\sqrt{\varepsilon_1}}{\alpha_{(n-1)/2}}\left(2+\frac{\gamma_{(n-1)/2}}{\alpha_1}\right).
\end{equation}
Analogously, from inequalities (\ref{antiIneq1}) and (\ref{antiIneq2}) with $j=(n+1)/2$ 
and inequality (\ref{antiIneq3}) for $i=(n-1)/2$ and $k=n+1$, one obtains
\begin{eqnarray}\label{Salavrakos}
\left\|\left[A_1'(B_1'+B_n')+\zeta_{\frac{n-1}{2}}A_{\frac{n+3}{2}}'B_n'\right]\ket{\psi'}\right\|&\leq& \sqrt{2n(n-1)\varepsilon_2}+\varepsilon'.
%
%\frac{\sqrt{\varepsilon_1}}{\alpha_{(n-1)/2}}\left(2+\frac{\gamma_{(n-1)/2}}{\alpha_1}\right).\nonumber\\
\end{eqnarray}
We can then upper bound 
\begin{eqnarray}\label{antiIneq4}
\fl\left\|\{A_1',A_{\frac{n+1}{2}}'+A_{\frac{n+3}{2}}'\}\ket{\psi'}\right\|&\leq& \frac{1}{2\cos(\case{\pi}{2n})}
\left\|[A_1'(B_{\frac{n-1}{2}}'+2B_{\frac{n+1}{2}}'+B_{\frac{n+3}{2}}')\right.\nonumber\\
&&\left.\hspace{2cm}+(A_{\frac{n+1}{2}}'+A_{\frac{n+3}{2}}')(B_1'-B_n')]\ket{\psi'}\right\|+\sqrt{\varepsilon_1}(1+\sqrt{2})\nonumber\\
&&\hspace{-3cm}\leq \frac{1}{2\cos(\case{\pi}{2n})}\left\|\left[A_1'\left(B_{\frac{n-1}{2}}'+B_{\frac{n+3}{2}}'+2\frac{B_1'+B_n'}{\zeta_{(n-1)/2}}\right)+(A_{\frac{n+1}{2}}'+A_{\frac{n+3}{2}}')(B_1'-B_n')\right]\ket{\psi'}\right\|\nonumber\\
&&\hspace{-2.5cm}+\sqrt{\varepsilon_1}(1+\sqrt{2})+\frac{\varepsilon'}{2\cos(\pi/2n)\zeta_{(n-1)/2}}\nonumber\\
&&\hspace{-3cm}\leq\frac{1}{2\cos(\case{\pi}{2n})}\left\|\left(A_1'B_{\frac{n-1}{2}}'+A_1'B_{\frac{n+3}{2}}'+A_{\frac{n+1}{2}}'B_1'-A_{\frac{n+3}{2}}'B_n'\right)\ket{\psi'}\right\|\nonumber\\
&&\hspace{-2.5cm}+\sqrt{\varepsilon_1}(1+\sqrt{2})+\frac{3\varepsilon'}{2\cos(\pi/2n)\zeta_{(n-1)/2}}
+2\sqrt{\varepsilon_1 n(n-1)}.
\end{eqnarray}
In the first inequality we used  (\ref{eq:rob1}) twice in parallel (to exchange $A_{n+2}'$ and $A_{n+3}'$ with corresponding $B'$s) and once more separately (to exchange $A_1'$ with $B_{1,-n}'$). To get the second inequality we used (\ref{antiIneq3}) and for the final inequality we used twice (\ref{Salavrakos}).
Inequality (\ref{antiIneq3}) for $k=1$ and $i=(n-1)/2$ gives 
\begin{eqnarray}
\left\|[A_{\frac{n+1}{2}}'-A_{\frac{n+3}{2}}'-\zeta_{\frac{n-1}{2}}A_1']\ket{\psi'}\right\| 
\leq\varepsilon',\\ \left\|[B_{\frac{n+1}{2}}'-B_\frac{n+3}{2}'-\zeta_{\frac{n-1}{2}}B_1']\ket{\psi'}\right\|
\leq\varepsilon' 
%
%\left(2+\frac{\gamma_{(n-1)/2}}{\alpha_1}\right)\frac{\sqrt{\varepsilon_1}}{\alpha_{(n-1)/2}}
\end{eqnarray}
%
%\begin{equation}
%\|[B_{\frac{n+1}{2}}-B_\frac{n+3}{2}-\zeta_{\frac{n-1}{2}}B_1]\ket{\psi'}\|
%\leq \varepsilon' 
%
%\left(2+\frac{\gamma_{(n-1)/2}}{\alpha_1}\right)\frac{\sqrt{\varepsilon_1}}{\alpha_{(n-1)/2}},
%\end{equation}
%
with $C=A,B$, while for $k=n$ and $i=(n-1)/2$
\begin{equation}
\left\|[B_{\frac{n+1}{2}}'-B_\frac{n-1}{2}'-\zeta_{\frac{n-1}{2}}B_n']\ket{\psi'}\right\|
\leq\varepsilon', 
%
%\left(2+\frac{\gamma_{(n-1)/2}}{\alpha_1}\right)\frac{\sqrt{\varepsilon_1}}{\alpha_{(n-1)/2}},
\end{equation}
These three inequalities when applied to (\ref{antiIneq4}) give
\begin{eqnarray}
\fl\left\|\{A_1',A_{\frac{n+1}{2}}'+A_{\frac{n+3}{2}}'\}\ket{\psi'}\right\|&\leq& \frac{1}{2\cos(\case{\pi}{2n})\zeta_{\frac{n-1}{2}}}
\left\|\left(A_{\frac{n+1}{2}}'B_{\frac{n-1}{2}}'-A_{\frac{n+3}{2}}'B_{\frac{n+1}{2}}'
\right.\right.\nonumber\\
&&\hspace{3cm}\left.\left.+A_{\frac{n+1}{2}}'B_{\frac{n+1}{2}}'-A_{\frac{n+3}{2}}'B_{\frac{n+3}{2}}'\right)\ket{\psi'}\right\|\nonumber\\
&&+\sqrt{\varepsilon_1}(1+\sqrt{2})+\frac{3\varepsilon'}{\cos\case{\pi}{2n}\zeta_{(n-1)/2}}+2\sqrt{\varepsilon_1n(n-1)}.
\end{eqnarray}
To upper bound the norm appearing on the right-hand side and complete the proof we 
use inequalities (\ref{eq:rob3}) and (\ref{eq:rob4}) with $i=(n+1)/2$ which leads us to 
\begin{eqnarray}
\left\|\{A_1',A_{\frac{n+1}{2}}'+A_{\frac{n+3}{2}}'\}\ket{\psi'}\right\|&\leq& 2\sqrt{\varepsilon_1 n}\left(\frac{\sqrt{2}}{\zeta_{(n-1)/2}}+\sqrt{n-1}\right)
+\sqrt{\varepsilon_1}(1+\sqrt{2})
\nonumber\\
&&+\frac{3\sqrt{\varepsilon_1}}{\cos\case{\pi}{2n}\alpha_{(n-1)/2}\zeta_{(n-1)/2}}
\left(2+\frac{\gamma_{(n-1)/2}}{\alpha_1}\right).
\end{eqnarray}
To complete the proof let us notice that both $\omega_{ev}$ and $\omega_{odd}$, defined in Eqs. (\ref{Rob:lem1:1}) and (\ref{Rob:lem1:1a}) respectively, vanish when $\varepsilon \to 0$. Furthermore, the term dominating the scaling of $\omega_{ev}$ with $n$ for large $n$ is $4\varepsilon_1/(\xi_{n/2-1}\alpha_{n/2-1}) = 2\sqrt{\varepsilon}/(\sin^2(\pi/2n))$. It follows that for sufficiently large $n$ the function $1/\sin^2(\pi/2n)$ behaves like $(4/\pi^2)n^2+1/3+O(1/n^2)$ and therefore we can conclude that $\omega_{ev}$ scales quadratically with $n$ when $n$ is large enough, and for small $\varepsilon$ it behaves as $\sqrt{\varepsilon}$. After analogous analysis one finds that $\omega_{odd}$ exhibits the same behaviour for small $\varepsilon$ and sufficiently large $n$.
\end{proof}
\begin{lem}\label{lastlemma}
Let $\ket{\psi'}$ and $A_i',B_i'$ be a state and measurements violating the chained Bell inequalities by 
$B_n^{\max}-\varepsilon$. Then, for an even number of measurements:
\begin{eqnarray}
&&\left\|\left(A_i'-s_iA_{\frac{n}{2}+1}'-c_iA_1'\right)\ket{\psi'}\right\|\leq g_{\mathrm{ev}}(\varepsilon,n),\nonumber\\
&&\left\|\left(B_i'-s_i'B_{\frac{n}{2},\frac{n}{2}+1}' -c_i'B_{1,-n}'\right)\ket{\psi'}\right\|\leq h_{\mathrm{ev}}(\varepsilon,n),
\end{eqnarray} 
while for an odd number of measurements:
\begin{eqnarray}
&&\left\|\left(A_i'-s_iA_{\frac{n+1}{2},\frac{n+3}{2}}' -c_iA_1'\right)\ket{\psi'}\right\|\leq g_{\mathrm{odd}}(\varepsilon,n),\nonumber\\
&&\left\|\left(B_i'-s_i'B_{\frac{n+1}{2}}' -c_i'B_{1,-n}'\right)\ket{\psi'}\right\|\leq h_{\mathrm{odd}}(\varepsilon,n).
\end{eqnarray}
The functions $g_{\mathrm{ev}}$, $h_{\mathrm{ev}}$, 
$g_{\mathrm{odd}}$ and $h_{\mathrm{odd}}$ vanish for $\epsilon\to 0$ and scale linearly with $n$.
\end{lem}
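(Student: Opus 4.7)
The strategy is to parallel the exact argument of Lemma \ref{Structure} step by step, replacing each exact identity by its approximate counterpart already established in the proofs of Lemmas \ref{lemmaRobust} and \ref{Rob:lem1}. In particular, the core robust building blocks — inequalities (\ref{antirob3}), (\ref{antirob4}), and (\ref{antiIneq3}) — play the role that the exact relations (\ref{Lemma1:5}), (\ref{Lemma1:6}), (\ref{Lemma1:13}) played in the ideal case, with error terms of order $\sqrt{\varepsilon_1}/\alpha_i$ (possibly multiplied by a factor involving $\gamma_i/\alpha_1$).

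For even $n$, the plan is as follows. To bound $\|(A_i'-s_iA_{n/2+1}'-c_iA_1')\ket{\psi'}\|$, I would apply (\ref{antiIneq3}) twice, once with $k=n/2+1$ and $i\mapsto n/2+1-i$ (yielding the robust version of Eq.~(\ref{Lemma2:1})) and once with $k=1$ and $i\mapsto i-1$ (yielding the robust version of Eq.~(\ref{Lemma2:2})). Adding the two resulting inequalities via the triangle inequality and dividing by two, and using the elementary identifications $\zeta_{i-1}/2=\cos[(i-1)\pi/n]=c_i$ and $\zeta_{n/2+1-i}/2=\sin[(i-1)\pi/n]=s_i$, gives
\[
g_{\mathrm{ev}}=\frac{\sqrt{\varepsilon_1}}{2}\left[\frac{1}{\alpha_{n/2+1-i}}\left(2+\frac{\gamma_{n/2+1-i}}{\alpha_1}\right)+\frac{1}{\alpha_{i-1}}\left(2+\frac{\gamma_{i-1}}{\alpha_1}\right)\right].
\]
Analogously, to bound the $B$ expression I would use (\ref{antirob3}) with $k=n/2$, $i\mapsto n/2-i$ and (\ref{antirob4}) with $i\mapsto i-1$, producing robust analogues of Eqs.~(\ref{Lemma2:3}) and (\ref{Lemma2:4}), and then add and divide by two, using $\xi_{i-1}/2=c_i'$ and $\xi_{n/2-i}/2=s_i'$. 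This yields
\[
h_{\mathrm{ev}}=\sqrt{\varepsilon_1}\left[\frac{1}{\alpha_{n/2-i}}+\frac{1}{\alpha_{i-1}}\right].
\]
Boundary index values (where the approximate identities degenerate) are handled trivially as in the exact proof.

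For odd $n$ the plan is identical, now using (\ref{antiIneq3}) with $k=(n+1)/2$ (for the robust version of (\ref{Lemma2:5})) combined with $k=1$ (for (\ref{Lemma2:6})) to get $g_{\mathrm{odd}}$, and (\ref{antiIneq3}) with $k=(n+1)/2$ paired with (\ref{antirob4}) to produce the robust versions of (\ref{Lemma2:7}) and (\ref{Lemma2:8}), giving $h_{\mathrm{odd}}$. The coefficient matching works because $\zeta_i/2$ and $\xi_i/2$ are precisely the required $s_i,c_i,s_i',c_i'$ after the appropriate index shifts, exactly as in the exact proof.

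The only substantive check left is the asymptotic scaling. Each error term is of the form $\sqrt{\varepsilon_1}/\alpha_k$ (possibly with an $O(1)$ multiplicative factor involving $\gamma_k/\alpha_1$). Since $\alpha_k=\sin(\pi/n)/[2\cos(\pi/2n)\sqrt{\sin(k\pi/n)\sin((k+1)\pi/n)}]$ is $\Theta(1/n)$ in the bulk and worst-case for $k$ near the endpoints still satisfies $1/\alpha_k=O(n)$, all four functions scale at most linearly in $n$; and each is proportional to $\sqrt{\varepsilon_1}=\sqrt{\varepsilon/\cos(\pi/2n)}$, hence vanishes as $\varepsilon\to 0$. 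I expect the main obstacle to be purely bookkeeping — tracking which $\alpha_k$, $\gamma_k$, and index shifts enter for each fixed $i$ and confirming that the worst-case index does not spoil the linear $n$-scaling — but no new conceptual ingredient beyond Lemmas \ref{lemmaRobust} and \ref{Rob:lem1} is required.
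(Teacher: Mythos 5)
Your proposal follows the paper's proof essentially verbatim: the same splitting of each expression into the two robust analogues of Eqs.~(\ref{Lemma2:1})--(\ref{Lemma2:2}) (resp.\ (\ref{Lemma2:3})--(\ref{Lemma2:4}) and their odd-$n$ counterparts), the same use of (\ref{antirob3}), (\ref{antirob4}) and (\ref{antiIneq3}), and your explicit $g_{\mathrm{ev}}$, $h_{\mathrm{ev}}$ coincide with the paper's, as does the $1/\alpha_k=O(n)$ scaling argument. The only slip is a citation mismatch in the odd case — the robust version of Eq.~(\ref{Lemma2:5}) comes from (\ref{antirob3}) (the $\xi_i C_{k,k+1}$ relation), not (\ref{antiIneq3}) — which does not affect the structure or the resulting bounds.
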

\begin{proof}
We will follow the proof of Lemma \ref{Structure}. We can write
\begin{eqnarray}\label{Skrzpczk1}
& \left\|\left(A_i'-s_iA_{\frac{n}{2}+1}'
-c_iA_1'\right)\ket{\psi'}\right\|\
  \nonumber\\ & = \frac{1}{2}\left\|\left(A_i' - A_{2-i}' - \zeta_{\frac{n}{2}+1-i}A_{\frac{n}{2}+1}' + A_i' + A_{2-i}' - \zeta_{i-1}A_{1}'\right)\ket{\psi'}\right\| \nonumber\\
  & \leq \frac{1}{2}\left\|\left(A_i' - A_{2-i}' - \zeta_{\frac{n}{2}+1-i}A_{\frac{n}{2}+1}'\right) \ket{\psi'}\right\|\ + \frac{1}{2}\left\|\left(A_i' + A_{2-i}' - \zeta_{i-1}A_{1}'\right)\ket{\psi'}\right\| \nonumber\\
  & \leq \left(1 + \frac{\gamma_{|\frac{n}{2}+1-i|}}{2\alpha_{1}}\right)\frac{\sqrt{\varepsilon_1}}{\alpha_{|\frac{n}{2}+1-i|}} + \left(1 + \frac{\gamma_{i-1}}{2\alpha_{1}}\right)\frac{\sqrt{\varepsilon_1}}{\alpha_{i-1}} = g_{\mathrm{ev}}
\end{eqnarray}
The equality is just rewritten pair of Eqs. (\ref{Lemma2:1}) and (\ref{Lemma2:2}), the first inequality is the triangle inequality followed by the bounds from Eq. (\ref{antiIneq3}). Absolute value appearing in $\gamma_{|\frac{n}{2}+1-i|}$ and $\alpha_{|\frac{n}{2}+1-i|}$ is justified in the discussion after Eq. (\ref{Lemma2:1}). Note that this bound cannot be applied to the cases when $i = 1,n/2 +1, n$ because for these cases coefficients $\alpha_i$ and $\gamma_i$ are not defined. The cases $i = 1,n/2 +1$ are trivial statements and $g_{\mathrm{ev}} = 0$, while for the case $i = n$ the norm $\left\|\left(A_i' + A_{2-i}' - \zeta_{i-1}A_{1}'\right)\ket{\psi'}\right\| \leq \sqrt{\varepsilon_1/\alpha_1}$ is obtained by fixing $j = n$ and $i=1$ in (\ref{eq:rob2}), so $g_{\mathrm{ev}} = (1 + \gamma_{|\frac{n}{2}+1-i|}/2\alpha_{1})(\sqrt{\varepsilon_1}/\alpha_{|\frac{n}{2}+1-i|}) + \sqrt{\varepsilon_1/\alpha_1}/2$. Similarly it can be shown that:
\begin{eqnarray}\label{Skrzpczk2}
& \left\|\left(B_i'-s_i'B_{\frac{n}{2},\frac{n}{2}+1}'
-c_i'B_{1,-n}'\right)\ket{\psi'}\right\|\
  \nonumber\\ & = \frac{1}{2}\left\|\left(B_i' - B_{1-i}' - \xi_{\frac{n}{2}-i}B_{\frac{n}{2},\frac{n}{2}+1}' + B_i' + B_{1-i}' - \xi_{i-1}B_{1,-n}'\right)\ket{\psi'}\right\| \nonumber\\
  & \leq \frac{1}{2}\left\|\left(B_i' - B_{1-i}' - \xi_{\frac{n}{2}-i}B_{\frac{n}{2},\frac{n}{2}+1}'\right)\ket{\psi'}\right\|\ + \frac{1}{2}\left\|\left(B_i' + B_{1-i}' - \xi_{i-1}B_{1,-n}'\right)\ket{\psi'}\right\| \nonumber\\
  &\leq \sqrt{\varepsilon_1}\left( \frac{1}{\alpha_{i-1}} + \frac{1}{\tilde{\alpha}_{\frac{n}{2}-i}}\right) = h_{\mathrm{ev}},
\end{eqnarray}
where in the last inequality we used already established bounds given in Eqs. (\ref{antirob3}) and (\ref{antirob4}) and we introduced notation $\tilde{\alpha}_{n/2-i}$ which is equal to $\alpha_{n/2-i}$ when $n/2 > i$, and to $\alpha_{i - 1 -n/2}$ otherwise (for the clarification see the text following Eq. (\ref{Lemma2:3})). Similarly to the previous case the bound is properly defined unless $i \in \{1,n,n/2, n/2+1 \}$. For the cases $i = 1,n$ the norm $\Vert(B_i' + B_{1-i}' - \xi_{i-1}B_{1,-n}')\ket{\psi'}\Vert $ is trivial, thus equal to $0$, so we have $h_{\mathrm{ev}} = \sqrt{\varepsilon_1}/\tilde{\alpha}_{n/2-i}$. Similarly when $i = n/2, n/2+1$, the norm $\Vert(B_i' - B_{1-i}' - \xi_{\frac{n}{2}-i}B_{\frac{n}{2},\frac{n}{2}+1}')\ket{\psi'}\Vert $ is equal to $0$, causing $h_{\mathrm{ev}}$ to be equal to $\sqrt{\varepsilon_1}/\alpha_{i-1}$.
By repeating analogue procedure it is easy to obtain bounds for the case when the number of inputs is odd:
\begin{eqnarray}
& g_{\mathrm{odd}} = \sqrt{\varepsilon_1}\left(\frac{1}{\tilde{\alpha}_{\frac{n+1}{2}-i}}+\left(1 + \frac{\gamma_{i-1}}{2\alpha_1}\right)\frac{1}{\alpha_{i-1}}  \right), \\
& h_{\mathrm{odd}} = \sqrt{\varepsilon_1}\left(\frac{1}{\alpha_{i-1}}+\left(1 + \frac{\gamma_{|\frac{n+1}{2}-i|}}{2\alpha_1}\right)\frac{1}{\alpha_{|\frac{n+1}{2}-i|}}  \right).
\end{eqnarray}
Similarly to the case when the number of inputs is even for $i = 1,n$ the expression for $g_{\mathrm{odd}}$ is estimated to be $\sqrt{\varepsilon_1}/ \tilde{\alpha}_{\frac{n+1}{2}-i}$ and for $i = (n+1)/2,(n+3)/2$ it reduces to $\left[\sqrt{\varepsilon_1}/ \alpha_{i-1}\right]\left(1 + \gamma_{i-1}/(2\alpha_1)\right)$. Also, for $i = (n+1)/2$ we have $h_{\mathrm{odd}} = \sqrt{\varepsilon_1}/\alpha_{i-1}$,and for $i = 1,n$ we estimate  $h_{\mathrm{odd}}=[\sqrt{\varepsilon_1}/ \alpha_{|\frac{n+1}{2}-i|}](1 + \gamma_{|\frac{n+1}{2}-i|}/(2\alpha_1))$.\\
In the worst case functions $g_{\mathrm{ev}}$,$h_{\mathrm{ev}}$,$g_{\mathrm{odd}}$ and $h_{\mathrm{odd}}$ behave as $\sin^{-1}(\pi/n)$ when $n$ is sufficiently large. Linear scaling with respect to $n$ of the aforementioned functions when $n$ is sufficiently large can be confirmed by considering the behaviour of function $\sin^{-1}(\pi/n)$ when $n$ is large enough.
\end{proof}

\begin{lem}\label{junklemma}
Let $\ket{\varphi}$ be the state of the additional degrees of freedom from Theorem \ref{thm1} and $\ket{\varphi'}$ state defined in Eq. (\ref{eq:phi'}). Then,
\begin{equation}\label{cvrkut}
\| \ket{\varphi} - \ket{\varphi'}\| \leq \left(\frac{1}{2} + \sqrt{2} \right)\sqrt{\varepsilon_1} + \frac{\omega'}{4},
\end{equation}
where $\omega'\equiv\omega_{\mathrm{ev}}$ for even $n$ and $\omega'\equiv\omega_{\mathrm{odd}}$ for odd $n$.
\end{lem}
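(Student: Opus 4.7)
The plan is to exploit the fact that $\ket{\varphi}$ and $\ket{\varphi'}$ are positive scalar multiples of the same vector $\ket{\eta}\equiv(\mathbbm{1}+Z_A')(\mathbbm{1}+\widetilde{Z}_B)\ket{\psi'}$: by definition $\ket{\varphi}=\ket{\eta}/\|\ket{\eta}\|$ whereas $\ket{\varphi'}=\ket{\eta}/(2\sqrt{2})$. A direct expansion of $\|\ket{\varphi}-\ket{\varphi'}\|^2$ then yields
\[
\|\ket{\varphi}-\ket{\varphi'}\|=\bigl|1-\|\ket{\varphi'}\|\bigr|,
\]
so the entire task reduces to showing that $\|\ket{\varphi'}\|$ is close to $1$.

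Next I would compute $8\|\ket{\varphi'}\|^2=\langle\psi'|(\mathbbm{1}+\widetilde{Z}_B^{\dagger})(\mathbbm{1}+Z_A')^2(\mathbbm{1}+\widetilde{Z}_B)|\psi'\rangle$ using three simple facts: $(\mathbbm{1}+Z_A')^2=2(\mathbbm{1}+Z_A')$ because $Z_A'^2=\mathbbm{1}$; $Z_A'$ commutes with $\widetilde{Z}_B$ since they act on different parties; and $\widetilde{Z}_B$ is unitary. These collapse the expression to $2\langle\psi'|(2\mathbbm{1}+\widetilde{Z}_B+\widetilde{Z}_B^{\dagger})(\mathbbm{1}+Z_A')|\psi'\rangle$. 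Inequalities (\ref{nierownosc1}) and (\ref{nierownosc2}) then allow me to replace $\widetilde{Z}_B\ket{\psi'}$ by $Z_A'\ket{\psi'}$ at a cost of $2\sqrt{\varepsilon_1}$, and the unitarity of $\widetilde{Z}_B$ yields the same bound for $\widetilde{Z}_B^{\dagger}\ket{\psi'}$. These substitutions produce $\|\ket{\varphi'}\|^2=1+\langle Z_A'\rangle+O(\sqrt{\varepsilon_1})$.

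The whole bound therefore rests on showing that $|\langle Z_A'\rangle|$ is small, which I expect to be the main obstacle. For this I would combine the approximate swap identity $X_A'X_B'\ket{\psi'}\approx\ket{\psi'}$ (following from Eq. (\ref{nierownosc1}) and $X_A'^2=\mathbbm{1}$ in the even-$n$ case, with the obvious variant for odd $n$ obtained from Lemma \ref{Rob:lem1}) with the approximate anticommutation $\|\{X_A',Z_A'\}\ket{\psi'}\|\leq\omega'$. The manipulation is: rewrite $\langle Z_A'\rangle\approx\langle\psi'|Z_A'X_A'X_B'|\psi'\rangle$ using the swap identity, commute the Bob-side $X_B'$ past the Alice-side $Z_A'$, flip $Z_A'X_A'$ into $-X_A'Z_A'$ by the anticommutation, commute $X_B'$ back past $X_A'$, and use the swap identity once more to arrive at $\langle Z_A'\rangle\approx-\langle Z_A'\rangle$, which implies $|\langle Z_A'\rangle|\lesssim\sqrt{\varepsilon_1}+\omega'/2$ after careful bookkeeping.

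Combining these steps gives $|\|\ket{\varphi'}\|^2-1|\lesssim\sqrt{\varepsilon_1}+\omega'/2$. To pass from the square to $|\|\ket{\varphi'}\|-1|$, I would use the identity $|x-1|(x+1)=|x^2-1|$ with $x=\|\ket{\varphi'}\|$; since the previous step already implies $\|\ket{\varphi'}\|$ is close to $1$, the factor $1+\|\ket{\varphi'}\|$ is close to $2$ and effectively halves the coefficients, producing the $\omega'/4$ of the Lemma. The precise constant $(1/2+\sqrt{2})$ in front of $\sqrt{\varepsilon_1}$ should fall out of tracking exactly the various $\sqrt{\varepsilon_1}$ contributions (two from $\widetilde{Z}_B^{(\dagger)}\to Z_A'$, one from each swap-identity replacement). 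The odd-$n$ case is handled by the same strategy with $\omega'=\omega_{\mathrm{odd}}$, the only subtlety being that $X_A'$ is no longer unitary so the swap identity step has to be adapted using the odd-case version of the anticommutation from Lemma \ref{Rob:lem1}.
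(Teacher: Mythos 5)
Your proposal is correct and follows essentially the same route as the paper: reduce $\|\ket{\varphi}-\ket{\varphi'}\|$ to $\bigl|\|\ket{\varphi'}\|-1\bigr|$, reduce that to an estimate of $|\langle\psi'|Z_A'|\psi'\rangle|$ via the conversions $\widetilde{Z}_B\to Z_B'\to Z_A'$, and bound $|\langle\psi'|Z_A'|\psi'\rangle|\lesssim\sqrt{\varepsilon_1}+\omega'/2$ by combining the approximate anticommutation (\ref{nierownosc3}) with the swap $X_A'\leftrightarrow X_B'$ from (\ref{nierownosc1}), exactly as in the paper's Lemma \ref{junklemma}. The only differences are cosmetic (you work with $\|\ket{\varphi'}\|^2$ and divide by $1+\|\ket{\varphi'}\|$, while the paper bounds $\|\ket{\varphi'}\|$ directly via $\sqrt{1+x}\le 1+x/2$), and both incur the same mild bookkeeping looseness with constants.
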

\begin{proof}
Let us notice that 
$\|\ket{\varphi}-\ket{\varphi'}\|=\|\ket{\varphi'}\|-1$ and then by using the explicit form of $\ket{\varphi'}$ and the inequalities (\ref{nierownosc1}) and (\ref{nierownosc2}), we can write
\begin{eqnarray}
\label{cvrkut0}
\| \ket{\varphi'} \| & \leq & \frac{1}{2\sqrt{2}} \left( \| (\mathbbm{1}+Z_A')(\mathbbm{1}+Z_B')\ket{\psi'} \| + 2\sqrt{\varepsilon_1} \right) \nonumber\\
& \leq &\frac{1}{2\sqrt{2}} \left[ \| (\mathbbm{1}+Z_A')^2\ket{\psi'} \| + 4\sqrt{\varepsilon_1} \right] \nonumber\\ 
& = &\frac{1}{\sqrt{2}}\| (\mathbbm{1}+Z_A')\ket{\psi'} \| + \sqrt{2\varepsilon_1}
\end{eqnarray}
Now we want to estimate $|\ket{\psi'}Z_A'\ket{\psi'}|$. For this we will follow similar estimation presented in \cite{MKYS}. Note that due to unitarity of $Z_A'$, and Eqs. (\ref{nierownosc1}) and (\ref{nierownosc3})  we can write $\|\ (Z_A'X_B' +X_A'Z_A')\ket{\psi'}\|\ =  \|\ (Z_A'X_B'-Z_A'X_A'+Z_A'X_A' +X_A'Z_A')\ket{\psi'}\|\ \leq \sqrt{\varepsilon_1} + \omega'$. The norm will not change if we multiply the expression in brackets by some unitary operator. This means that $|\bra{\psi'}Z_A'\ket{\psi'} + \bra{\psi'}X_B'X_A'Z_A'\ket{\psi'}| \leq \sqrt{\varepsilon_1} + \omega'$. We can put the same bound for the complex conjugated expression
\begin{equation}\label{cvrkut1}
 |\bra{\psi'}Z_A'\ket{\psi'} + \bra{\psi'}X_B'Z_A'X_A'\ket{\psi'}| \leq \sqrt{\varepsilon_1} + \omega'.
\end{equation} 
 On the other hand, using unitarity of $\bra{\psi'}Z_A'$ and result (\ref{nierownosc1}) we can write
 \begin{equation}\label{cvrkut2}
 |\bra{\psi'}Z_A'\ket{\psi'} - \bra{\psi'}X_B'Z_A'X_A'\ket{\psi'}| \leq \sqrt{\varepsilon_1}.
\end{equation} 
Finally if we sum Eqs. (\ref{cvrkut1}) and (\ref{cvrkut2}) we get 
\begin{equation}
\label{cvrkut3}
|\bra{\psi'}Z_A'\ket{\psi'}| \leq \sqrt{\varepsilon_1} + \omega'/2
\end{equation}
If we plug this result in (\ref{cvrkut0}) we will get
\begin{eqnarray}
\|\ \ket{\varphi'} \|\ & \leq \sqrt{\bra{\psi'}(\mathbbm{1}+Z_A')\ket{\psi'}} + \sqrt{2\varepsilon_1} \nonumber\\ & \leq \sqrt{1 + \sqrt{\varepsilon_1} + \omega'/2} + \sqrt{2\varepsilon_1} \nonumber\\
& \leq 1 + (\frac{1}{2}+\sqrt{2})\sqrt{\varepsilon_1} + \frac{\omega'}{4}
\end{eqnarray}
This estimation concludes the proof, since it is easy to check that the Eq. (\ref{cvrkut}) is satisfied.
\end{proof}
\end{document}